\documentclass[runningheads]{llncs}

\usepackage{marvosym}
\usepackage{amsmath,amssymb, amsthm}
\usepackage[normalem]{ulem}
\usepackage{bm}
\usepackage{tikz}
\usepackage{hyperref}
\usepackage{ltl}
\usepackage{graphicx}

\usepackage[strings]{underscore}	 %
\usepackage{pifont}
\usepackage{cleveref}
\usepackage{xspace}
\usepackage[ruled,vlined]{algorithm2e}
\usepackage{mathtools}
\usetikzlibrary{arrows,automata,positioning,shapes,backgrounds,patterns,calc, fit}
\usepackage{subcaption}
\usepackage{xcolor}
\usepackage{multirow}
\usepackage{cite}
\usepackage{orcidlink}
\usepackage{listings}
\usepackage{comment}
\usepackage{booktabs}
\usepackage{wrapfig}

\newsavebox{\measurebox}

\definecolor{dkcyan}{rgb}{0.1, 0.3, 0.3}
\definecolor{dkgreen}{rgb}{0,0.3,0}
\definecolor{olive}{rgb}{0.5, 0.5, 0.0}
\definecolor{dkblue}{rgb}{0,0.1,0.5}

\definecolor{col:ln}{rgb}  {0.1, 0.1, 0.7}
\definecolor{col:str}{rgb} {0.8, 0.0, 0.0}
\definecolor{col:db}{rgb}  {0.9, 0.5, 0.0}
\definecolor{col:ours}{rgb}{0.0, 0.7, 0.0}

\definecolor{lightgreen}{RGB}{170, 255, 220}
\definecolor{darkbrown}{RGB}{121,37,0}

\colorlet{listing-comment}{gray}
\colorlet{operator-color}{darkbrown}
\colorlet{comment-color}{black!50}

\lstdefinelanguage{custom-lang}{
	keywords={let, in, match, with, when, if, then, else, elif, for, to, do, rec, return, new, not, and, while},
	keywordstyle=[1]\color{dkblue}\bfseries,
	morekeywords=[2]{append},
	keywordstyle=[2]\color{dkgreen},
	morekeywords=[3]{synthesize, toPrefixDisting, toAssumeSpec, toGuaranteeSpec, informationClasses, compositionalSynthesis, identicalAPs, append, allTraces, toPrefixDisting, toAssumeSpec, toGuaranteeSpec, informationClasses, identicalAPs, toPrefixNFA, toAssumeNBA, toFullInfNBA, add},
	keywordstyle=[3]\color{dkcyan},
	comment=[l][\color{comment-color}]{//},
	literate=%
	{=}{{{\color{operator-color}=}}}1
	{<-}{{{\color{operator-color}$\leftarrow$}}}1
	{|}{{{\color{dkblue}$\mid$}}}1
	{:}{{{\color{dkblue}:}}}1
	{:=}{{{\color{dkblue}:=}}}1
	{@}{ }1
}

\lstdefinestyle{default}{
	escapeinside={(*}{*)},
	basicstyle=,
	columns=fullflexible,
	commentstyle=\sffamily\color{black!50!white},
	framexleftmargin=1em,
	framexrightmargin=1ex,
	keepspaces=true,
	keywordstyle=\color{dkblue},
	mathescape,
	numbers=left,
	numberblanklines=false,
	numbersep=0.5em,
	numberstyle=\relscale{0.75}\color{gray}\ttfamily,
	showstringspaces=true,
	stepnumber=1,
	xleftmargin=1.2em,
}

\lstnewenvironment{mycode}[1][]
{\small
	\lstset{
		style=default, 
		language=custom-lang,
		#1
	}
}
{}

\newcommand\ldot{\mathpunct{.}}

\newcommand{\variables}{\mathcal V}

\newcommand{\vc}{\mathtt{c}_b}
\newcommand{\vin}{\mathtt{b}_{in}}
\newcommand{\vout}{\mathtt{b}_{out}}

\newcommand{\vt}{\mathtt{t}}

\newcommand{\tracedistinguishability}{\tau}
\newcommand{\prefixdistinguishability}{\rho}
\newcommand{\informationflowassumption}{\mathcal{I}}
\newcommand{\prefixinformationflowassumption}{\mathcal{P}}
\newcommand{\guaranteespecification}{\mathbb{G}}
\newcommand{\trace}{\pi}

\newcommand{\informationclass}{c}
\newcommand{\informationclasses}{\mathcal{C}}
\newcommand{\informationclassspecifcation}{\mathbb{C}}
\newcommand{\assumespecification}{\mathbb{A}}
\newcommand{\hyperimplementation}{\mathcal{H}}
\newcommand{\transitionsystem}{T}
\newcommand{\fullinformationspec}{\mathbb{F}}
\newcommand{\informationclassspecification}{\mathbb{I}}

\theoremstyle{plain}
\newtheorem*{theorem*}{Theorem}
\newtheorem*{lemma*}{Lemma}

\newtheorem{construction}{Construction}

\usepackage{hyperref} %
\usepackage[firstpage]{draftwatermark} %

\SetWatermarkAngle{0}
\SetWatermarkAngle{0}
\SetWatermarkText{\raisebox{17.5cm}{%
\hspace{0.1cm}%
\href{https://doi.org/10.5281/zenodo.10948514}{\includegraphics{}}%
\hspace{11cm}%
\includegraphics{}%
}}

\begin{document}

\title{Information Flow Guided Synthesis\newline with Unbounded Communication\thanks{This work was funded by the German Israeli Foundation (GIF) Grant No. I-1513-407./2019,  by DFG grant 389792660 as part of \href{https://perspicuous-computing.science}{TRR~248 -- CPEC}, and by the ERC Grant HYPER (No. 101055412).}}

\titlerunning{Information Flow Guided Synthesis with Unbounded Communication}

\institute{}
\authorrunning{B. Finkbeiner and N. Metzger and Y. Moses}
\author{Bernd Finkbeiner\inst{1}\orcidlink{0000-0002-4280-8441}
\and 
Niklas Metzger\inst{1}\textsuperscript{(\Letter)}\orcidlink{0000-0003-3184-6335}
\and 
Yoram Moses\inst{2}\orcidlink{0000-0001-5549-1781}}

\institute{CISPA Helmholtz Center for Information Security, Saarland, Germany \email{\{finkbeiner, niklas.metzger\}@cispa.de} 
\and
The Andrew and Erna Viterbi Faculty of Electrical and Computer\\ Engineering and the Taub Faculty of Computer Science, Technion, Israel\\
\email{moses@technion.ac.il}
}

\maketitle

\begin{abstract}
Information flow guided synthesis is a compositional approach to the automated construction of distributed systems where the assumptions between the components are captured as information-flow requirements. Information-flow requirements are hyperproperties that ensure that if a component needs to act on certain information that is only available in other components, then this information will be passed to the component. We present a new method for the automatic construction of information flow assumptions from specifications given as temporal safety properties. The new method is the first approach to handle situations where the required amount of information is unbounded. For example, we can analyze communication protocols that transmit a stream of messages in a potentially infinite loop. We show that component implementations can then, in principle, be constructed from the information flow requirements using a synthesis tool for hyperproperties. We additionally present a more practical synthesis technique that constructs the components using efficient methods for standard synthesis from trace properties. We have implemented the technique in the prototype tool \textsc{FlowSy}, which outperforms previous approaches to distributed synthesis on several benchmarks.

\end{abstract}

\section{Introduction}\label{sec:introduction}

More than 65 years after its introduction by Alonzo Church~\cite{Church/57/Applications}, the synthesis of reactive systems, and especially the synthesis of \emph{distributed} reactive systems, is still a most intriguing challenge. In the basic reactive synthesis problem, we translate a specification, given as a formula in a temporal logic, into an implementation that is guaranteed to satisfy the specification for every possible input from the environment. In the synthesis of \emph{distributed systems}~\cite{PnueliR90}, we must find an implementation that consists of multiple components that communicate with each other via shared variables in a given architecture. While the basic synthesis problem is, by now, well-supported with algorithms and tools (cf.~\cite{DBLP:reference/mc/BloemCJ18,DBLP:journals/corr/abs-2206-00251}), and despite a long history of theoretical advances~\cite{Manna+Wolper/84/Synthesis,PnueliR90,kv01,fs05,Madhusudan+Thiagarajan/01/Distributed,Madhusudan+Thiagarajan/02/Decidable}, no practical methods are currently known for the synthesis of distributed systems.

A potentially game-changing idea is to synthesize the systems compositionally, one component at a time~\cite{DBLP:journals/ijfcs/ScheweF07,DBLP:conf/tacas/ChatterjeeH07,DBLP:conf/tacas/BloemCJK15,SafralessCompositionalSynthesis,KuglerS09,CompositionalAlgorithmsforLTLSynthesis, dependency-based}. The key difficulty in automating compositional synthesis is to find assumptions on the behavior of each component that are sufficiently strong so that each component can guarantee the satisfaction of the specification based on the guarantees of the other components, and, at the same time, sufficiently weak, so that the assumptions can actually be realized. In our previous work on \emph{information flow guided synthesis}~\cite{DBLP:conf/cav/FinkbeinerMM22}, we identified situations in which certain components must act on information that these components cannot immediately observe, but must instead obtain from other components. Such situations are formalized as information-flow assumptions, which are hyperproperties that express that the component eventually receives this information. Once the information flow assumptions are known, the synthesis proceeds by constructing the components individually so that they satisfy the information-flow assumptions of the other components provided that their own information-flow assumptions are likewise taken care of. 

Technically, the synthesis algorithm identifies a finite number of sets of infinite sequences of external inputs, so-called \emph{information classes}, such that the component only needs to know the information class, but not the individual input trace. In the first step, the output behavior of the component is fixed based on an abstract input that communicates the information class to the component. This abstract implementation is called a \emph{hyper implementation} because it leaves open how the information is encoded in the actual inputs of the component. Once all components have hyper implementations, the abstract input is then replaced by the actual input by inserting a monitor automaton that derives the information class from the input received by the component. 

This approach has two major limitations. The first is that the information flow requirement only states
that the information will \emph{eventually} be transmitted. This is sufficient for liveness properties where the necessary action can be delayed until the information is received. For safety, however, such a delay may result in a violation of the specification. As a result, the information flow assumptions of~\cite{DBLP:conf/cav/FinkbeinerMM22} are insufficient for handling safety, and \emph{the compositional synthesis approach is thus limited to liveness specifications.}
The second limitation is due to the restriction to a \emph{finite} number of information classes. As a result, the compositional synthesis approach is only successful if a solution exists that \emph{acts on just a finite amount of information}. 
The two limitations severely reduce the applicability of the synthesis method. Most specifications contain a combination of safety and liveness properties (cf.~\cite{DBLP:journals/corr/abs-2206-00251}). While it is possible to effectively approximate liveness properties through \emph{bounded} liveness properties (cf.~\cite{Finkbeiner+Schewe/13/Bounded}), which are safety properties, the converse is not true.  
Likewise, most distributed systems of interest are reactive in the sense that they maintain an ongoing interaction with the external environment. As a result, they do not conform to the limitation that they only act on a finite amount of information. For example, a communication protocol receives a new piece of information in each message and is hence required to transmit an unbounded amount of information. 

In this paper, we overcome both limitations with a new method for information flow guided synthesis that handles both safety properties and specifications of tasks that require the communication of an unbounded amount of information. In order to reason about safety, we consider  \emph{finite} prefixes of external inputs rather than infinite sequences. The key idea is to collect sets of finite sequences \emph{of the same length} into information classes. Such an information class refers to a specific point in time (corresponding to the length of its traces) and identifies the information that is needed at this point in time to avoid a violation of the safety property.
We then only require that the number of information classes is finite \emph{at each point in time}, while the \emph{total} number of information classes over the infinitely many prefixes of an execution may well be infinite. This allows us to handle situations where again and again some information must be transmitted in a potentially infinite loop.

\section{Running Example: Sequence Transmission}\label{sec:running:example}
\begin{figure}[t]

\begin{minipage}{0.5\textwidth}
\begin{subfigure}[b]{\textwidth}
\resizebox{.8\textwidth}{!}{
\tikzstyle{state}=[draw, circle, fill=none, minimum width=0.7cm, 
minimum height = 0.7cm,
align=center, thick]

\begin{tikzpicture}[->,>=stealth',shorten >= 1pt,auto]

\node[state] (p0){%
    $~q_0$
};
\node[state, accepting] (p1) [right = 2 of p0]{%
    $~q_1$
};

\node[state, draw=none] (left)[left = 0.7 of p0]{%
};

\node[state, draw=none] (format)[left = 1.3 of p0]{%
};

\path (left) edge[thick] (p0)
(p0) edge[loop above, thick] node[above ] {%
        $\mathtt{b}_{in_\pi} \leftrightarrow \mathtt{b}_{in_{\pi'}}$
      } (p0)
(p0) edge[thick] node[above] {%
        $\mathtt{b}_{in_\pi} \nleftrightarrow \mathtt{b}_{in_{\pi'}}$
      } (p1)
      ;
\end{tikzpicture}
}
\caption{The distinguishability NFA.}
\label{fig:sequencetransmission:disting}
\end{subfigure}

\begin{subfigure}[b]{\textwidth}
\resizebox{.8\textwidth}{!}{
\tikzstyle{state}=[draw, circle, fill=none, minimum width=0.7cm, 
minimum height = 0.7cm,
align=center, thick]

\begin{tikzpicture}[->,>=stealth',shorten >= 1pt,auto]

\node[state] (p0){%
    $~q_0$
};
\node[state, accepting] (p1) [right = 2 of p0]{%
    $~q_1$
};
\node[state, draw=none] (left)[left = 0.7 of p0]{%
};

\node[state, draw=none] (format)[left = 1.3 of p0]{%
};

\path (left) edge[thick] (p0)
(p0) edge[loop above, thick] node[above] {%
        $\ast$
      } (p0)
(p0) edge[thick] node[above] {%
        $\neg\vin$
      } (p1)
      ;
\end{tikzpicture}
}
\caption{An information class NFA.}
\label{fig:information:class}
\end{subfigure}
\end{minipage}
\begin{minipage}{0.47\textwidth}
\begin{subfigure}[b]{\textwidth}
\resizebox{\textwidth}{!}{

\tikzstyle{state}=[draw, circle, fill=none, minimum width=1.15cm, 
minimum height = 1.15cm, align=center, thick]

\begin{tikzpicture}[->,>=stealth',shorten >= 1pt,auto]

\node[state] (p0){%
    $~\vout$
};

\node (left) [left = 0.5 of p0]{};

\node[state] (p1) [above right = .6 and 2 of p0] {%
  $~\vout$
 };
\node[state] (p2) [below right= .6 and 2 of p0]{
    $\neg \vout$
    };

\path (left) edge[thick] (p0)
      (p0) edge[loop above=20, thick] node[left, xshift=-2pt] {%
        $\neg (\informationclass \vee \informationclass')$
      } (p0)
      (p1) edge[loop right, thick] node[right] {%
        $\informationclass $
      } (p1)
      (p2) edge[loop right, thick] node[right] {%
        $\informationclass' $
      } (p2)
      (p0) edge[thick, bend right=20] node[above] {%
        $\informationclass$
      } (p1)
      (p1) edge[thick, bend right=20] node[above, yshift=6pt, xshift=0pt ] {%
        $\neg (\informationclass \vee \informationclass')$
      } (p0)
      (p0) edge[thick, bend left=20] node [below,  align=center] {%
      $ \informationclass'$
      }(p2)
      (p2) edge[thick, bend left=20] node [below left, xshift=2pt ,  align=center] {%
      $ \neg (\informationclass \vee \informationclass')$
      }(p0)
      (p1) edge[thick, bend right=20] node [left, align=center] {%
      $ \informationclass'$
      }(p2)
      (p2) edge[thick, bend right=20] node [ right, align=center] {%
      $ \informationclass$
      }(p1)
       (p0) edge[loop left, thick, draw=none] node [left] {%
       $\phantom{\LTLtrue}$}
       (p0) %
      ;
\end{tikzpicture}
}
\caption{The safety hyper implementation.}
\label{fig:hyper:receiver}
\end{subfigure}
\end{minipage}
\label{fig:sequencetransmission}
\caption{The prefix distinguishability of the sequence transmission protocol as NFA in (a). The NFA representing the information class for output $\vout$ is shown in (b), where (c) is a hyper implementation of the receiver on information classes.}
\vspace{-7pt}
\end{figure}
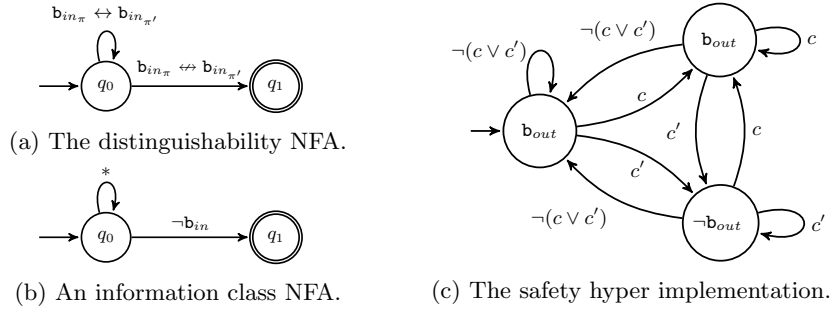

Our running example is a distributed system that implements sequence transmission.
The system consists of two components, the \emph{transmitter} $t$ and the \emph{receiver}~$r$. At every time step, the transmitter observes the current input bit $\vin$ from the external environment, the transmitter can communicate via $\vc$ with the receiver, and the receiver controls the output $\vout$. 
To implement a sequence transmission protocol, the receiver must output the value of the input bit one time step after it is received by the transmitter.
We can state this specification using the LTL formula $\LTLglobally (\vin \leftrightarrow \LTLnext \vout)$ for the receiver, and assume the transmitter specification to be \emph{true}.
In this example, compositional synthesis is only possible with assumptions about the communication between the components. 
We utilize an \emph{information-flow assumption} for compositional synthesis specified in the HyperLTL formula
$\forall \pi \forall \pi'. (\mathtt{c}_{b_{\pi}} \leftrightarrow \mathtt{c}_{b_{\pi'}}) \LTLuntil (\mathtt{b}_{in_{\pi}} \nleftrightarrow \mathtt{b}_{in_{\pi'}} \wedge \mathtt{c}_{b_\pi} \nleftrightarrow \mathtt{c}_{b_{\pi'}})$.
The formula states that on any pair of traces $\pi$ and $\pi'$ of an implementation, the communication bit $\vc$ on both traces must be equivalent until there is a difference on the input bit $\vin$ as well as a difference on the communication bit $\vc$. 
This implies that whenever the receiver must distinguish two input traces, it will observe a difference on its local inputs, namely $\vin$.
A nondeterministic finite automaton (NFA) accepting all finite traces that must be distinguished at the same time point is depicted in \Cref{fig:sequencetransmission:disting}.
In the course of this paper, we show that, for safety properties, the distinguishability requirement yields an information-flow assumption specified over finite traces. 
Based on the assumption, we heuristically build information classes over finite traces, such that all finite traces in the same class do not need to be distinguished.
\Cref{fig:information:class} shows an NFA for one of the two information classes. 
It accepts all finite traces that have $\neg \vin$ in the last step. 
On all these traces, the output $\neg \vout$ is correct.
For this example, there is only one other information class, namely the finite traces with $\vin$ in the last step.
We use the information classes to synthesize a \emph{hyper implementation} for the receiver, depicted in~\Cref{fig:hyper:receiver}.
A hyper implementation receives the current information classes, which are $\informationclass$ and $\informationclass'$ on the transitions, as input, and outputs the local outputs of the component. 
Whenever $\informationclass$ is the input, the correct output for all traces in $\informationclass$ must be set by the receiver. 
Note that, in this example, $\informationclass$ and $\informationclass'$ cannot occur together as there is no common output for $\vin \wedge \neg \vin$.
The hyper implementation is correct for all transmitter implementations.
After synthesizing both hyper implementations, for the transmitter and the receiver, we compose and decompose them to obtain local implementations.
Throughout this paper, we first define the prefix distinguishability and prefix information-flow assumption.
We then build assume and guarantee specifications that, based on the information classes, guarantee the correctness of the hyper implementations, and finally, we show how to construct the local solutions to complete the synthesis procedure.

\section{Preliminaries}

\paragraph{Architectures.} In this paper, we consider distributed architectures with two components: $p$ and $q$. Such architectures are given as tuple $(I_p,I_q,O_p,O_q,O_e)$, where $I_p,I_q,O_p,O_q,$ and $O_e$ are all subsets of the set $\variables$ of boolean variables. $O_p$ and $O_q$ are the sets of \emph{output variables} of $p$ and $q$. We denote by~$O_e$ the output variables of the uncontrollable external environment. We refer to $O_e$ also as the \emph{external inputs} of the system.
$O_p, O_q$ and~$O_e$ form a partition of $\variables$.
Finally, $I_p$ and $I_q$ are the \emph{input variables} of components $p$ and~$q$, respectively. 
The inputs and outputs are disjoint, i.e., $I_p \cap O_p = \emptyset$ and $I_q \cap O_q = \emptyset$. Each of the inputs $I_p$ and~$I_q$ of the components is either an output of the environment or an output of the other component, i.e., $I_p \subseteq O_q \cup O_e$ and $I_q \subseteq O_p \cup O_e$. 
For a set $V \subseteq \variables$, every subset $V' \subseteq V$ defines a \emph{valuation} of $V$, where
the variables in $V'$ have value $\mathit{true}$ and the variables in $V \setminus V'$ have value $\mathit{false}$.

\paragraph{Implementations.} For a set of atomic propositions $AP$ divided into inputs $I$ and outputs $O$, with $I \cap O = \emptyset$,  a $2^{O}$-labeled $2^{I}$-transition system is a 4-tuple $(T, t_0, \tau, o)$, where $T$ is a set of states, $t_0\in T$ is an initial state, $\tau: T \times 2^I\rightarrow T$ is a transition function, and $o: T \rightarrow 2^O$ is a labeling function.
An implementation of an architecture $(I_p,I_q,O_p,O_q,O_e)$ is a pair $(\transitionsystem_p, \transitionsystem_q)$, consisting of $\transitionsystem_p$, a $ 2^{O_p}$-labeled $2^{I_p}$ transition system $\transitionsystem_p$, and 
$\transitionsystem_q$,  a $2^{O_q}$-labeled $2^{I_q}$ transition system $\transitionsystem_q$.
The \emph{composition}  $\transitionsystem = \transitionsystem_p || \transitionsystem_q$ of the two transition systems $(T^p, t^p_0, \tau^p, o^p)$ and $(T^q, t^q_0, \tau^q, o^q)$ is the $2^{O_p \cup O_q}$-labeled $2^{O_e}$-transition system $(T, t_0, \tau, o)$, where $T = T^p\times T^q, t_0 = (t^p_0, t^q_0), \tau((t^p, t^q), x) = (\tau^p(t^p, (x \cup o^q(t^q))\cap I_p), \tau^q(t^q, (x \cup o^p(t^p))\cap I_q))$, $o (t^p, t^q) = o^p(t^p) \cup o^q(t^q)$, where $x \in 2^{O_e}$.

\paragraph{Specifications.}
The specifications are defined over the variables $\variables$.
For a set $V \subseteq \variables$ of variables, a \emph{trace} over $V$ is an infinite sequence $x_0x_1x_2\ldots \in (2^{V})^\omega$ of valuations of $V$.
A \emph{specification} over~$\variables$  is a set $\varphi \subseteq (2^{\variables})^\omega$ of traces over $\variables$.
Two traces over disjoint sets $V,V'\subset\variables$ can be \emph{combined} by forming the union of their valuations at each position, i.e., 
$x_0x_1x_2\ldots \sqcup y_0y_1y_2\ldots = (x_0 \cup y_0)(x_1\cup y_1)(x_2\cup y_2) \ldots$. Likewise, the \emph{projection} of a trace onto
a set of variables $V' \subseteq \variables$ is formed by intersecting the valuations with $V'$ at each position:
$x_0x_1x_2\ldots \downarrow_{V'} = (x_0 \cap V')(x_1 \cap V')(x_2 \cap V')\ldots$.
For a trace $\pi$ we use $\pi[n]$ to access the set on $\pi$ at time step $n$, and $\pi[n\ldots m]$ for the interval of $\pi$ from index $n$ to $m$. 
Our specification language is linear-time temporal logic (LTL)~\cite{LTL} with the set $\variables$ of variables serving as the atomic propositions.
We use the usual Boolean operations, the temporal operators Next $\LTLnext$, Until $\LTLuntil$, Globally  $\LTLglobally$, and Eventually $\LTLeventually$, and the semantic evaluation of (finite) traces $\pi$ with $\pi \vDash \varphi$.
LTL formulas can be represented by nondeterministic B\"{u}chi automata (NBAs) with an exponential blow-up.
A finite trace $\pi \in (2^{\variables})^*$ is a bad prefix of an LTL formula $\varphi$ if $\pi \nvDash \varphi$ and $\pi\cdot\pi' \nvDash \varphi$ for all $\pi' \in (2^{\variables})^\omega$.
An LTL formula is a \emph{safety} formula if every violation has a bad prefix.
Specifications over architectures are conjunctions  $\varphi_p \wedge \varphi_q$ of two LTL formulas, where $\varphi_p$ is defined over $O_p \cup O_e$, i.e., $\varphi_p$ relates outputs of the component $p$ to the outputs of the environment, and $\varphi_q$ is defined over $O_q \cup O_e$. We call these specifications the \emph{local} specifications of the component. 
An initial run $T(i_0, i_1, \ldots) = t_0t_1\ldots \in T^\omega$ for an infinite sequence of inputs $i_0, i_1 \ldots \in 2^{O_e}$ is an infinite sequence of states produced by the transition function such that $t_i = \tau(t_{i-1}, i_{i-1})$ for all $i \in \mathbb{N}$ and $t_0$ is the initial state.
The set of traces $\mathit{Traces}(T)$ of an implementation $T=(T^p, T^q)$ is then defined as all $(o(t_0)\cup i_0)(o(t_1)\cup i_1)\ldots \in (2^\variables)^\omega$ where $
T(i_oi_1\ldots) = t_0t_1\ldots $ for some $i_oi_1i_2\ldots \in (2^{O_e})^\omega$.
An implementation \emph{satisfies} a specification $\varphi$ if the traces of the implementation are contained in the specification, i.e., $\mathit{Traces}(T^p, T^q) \subseteq \varphi$.
Given an architecture and a specification~$\varphi$, the synthesis problem is to find an implementation~$T = (T_p, T_q)$  that satisfies~$\varphi$.
We say that a specification~$\varphi$ is \emph{realizable} in a given architecture if such an implementation exists, and \emph{unrealizable} if not. 

\paragraph{Automata.}
A non-deterministic automaton $\mathcal{A}$ is a tuple  $(\Sigma, Q, q_o, \delta, F)$ where $\Sigma$ is the input alphabet, $Q$ is a set of states, $q_o$ is the initial state, $\delta : Q \times \Sigma \rightarrow 2^{Q}$ is a transition function, and $F$ is a set of accepting states. 
For an input word $\sigma_0\sigma_1\ldots\sigma_k \in \Sigma^k$, a finite word automaton (NFA) $\mathcal{F}$ accepts a finite run $q_0q_1\ldots  q_k \in Q^k$ where $q_i \in \delta(q_{i-1}, \sigma_{i-1})$, if $q_k \in F$.
A Büchi automaton (NBA) $\mathcal{A}$ accepts all infinite runs $q_0q_1\ldots \in Q^{\omega}$ that visit states in $F$ infinitely often.
An automaton is deterministic if the transition function $\delta$ is injective.
The language of an automaton $\mathcal{A}$  is the set of its accepting runs, and is denoted by  $\mathcal{L}(\mathcal{A})$.

\paragraph{Hyperproperties.}
Information-flow assumptions are hyperproperties. 
A \emph{hyperproperty over~$\variables$} is a set $H \subseteq 2^{(2^{\variables})^\omega}$ of sets of traces over $\variables$~\cite{ClarksonS10}. An implementation $(T_p, T_q)$ satisfies the hyperproperty $H$ iff the set of its traces is an element of $H$, i.e., $\mathit{Traces}(T_p, T_q) \in H$. 
A convenient specification language for hyperproperties is the temporal logic HyperLTL~\cite{HyperLTL}, which extends LTL with trace quantification, i.e., $\forall \pi. \varphi$ and $\exists \pi. \varphi$. 
In HyperLTL, atomic propositions are indexed by a trace variables, which make expressing properties like 
``\emph{$\psi$ must hold on all traces}''possible, expressed by $\forall \pi.~\psi\,$. 
Dually, one can express that ``\textit{there exists a trace on which $\psi$ holds}'', denoted by $\exists \pi.~\psi\,$. 
Sometimes, a hyperproperty can be expressed as a binary relation on traces. 
A relation $R \subseteq (2^{\variables})^\omega \times (2^{\variables})^\omega$ of pairs of traces defines the hyperproperty $H$, where a set $T$ of traces is an element of $H$ iff for all pairs $\pi, \pi' \in T$ of traces in $T$ it holds that $(\pi, \pi') \in R$.
We call a hyperproperty defined in this way a \emph{2-hyperproperty}.
In HyperLTL, 2-hyperproperties are expressed as formulas with two universal quantifiers and no existential quantifiers.
A 2-hyperproperty can equivalently be represented as a set of infinite sequences over the product alphabet $\variables^2$: we can represent a given 2-hyperproperty $R \subseteq \variables^\omega \times \variables^\omega$, by $R' = \{ (\sigma_0, \sigma_0') (\sigma_1,\sigma_1') \ldots \mid (\sigma_0\sigma_1\ldots, \sigma_0'\sigma_1'\ldots) \in R \}$. This representation is convenient for the use of automata to recognize 2-hyperproperties.

\section{Prefix Information Flow}\label{sec:safety:information:flow}
As argued in~\cite{DBLP:conf/cav/FinkbeinerMM22}, identifying information flow between the components is crucial for distributed synthesis, because the specification may require a component's actions to depend on external inputs that are not directly observable by the component.
To react to the external inputs correctly, at least the relevant information must be transferred to the component.
The fundamental concept to identify when a component requires information transfer is captured by a \emph{distinguishability} relation on sequences of environment outputs.
We recall the definition of distinguishability for a component $p$ from~\cite{DBLP:conf/cav/FinkbeinerMM22}:

\begin{definition}[Trace distinguishability~\cite{DBLP:conf/cav/FinkbeinerMM22}]\label{def:distinguishability:relation}
Let $\varphi_p$ be an LTL specification of ~$p$. The corresponding \emph{trace distinguishability relation} is defined as 
\vspace{-6pt}
\begin{align*}
		\tracedistinguishability_{p} = \{ (\pi_e, \pi_e') \in &(2^{O_e})^\omega \times (2^{O_e})^\omega \mid\\
		 &\forall \pi_p \in (2^{O_p})^\omega\ldot \pi_e \sqcup \pi_p  \nvDash \varphi_p \text{ or } \pi_e' \sqcup \pi_p \nvDash \varphi_p\}
\end{align*}                
\end{definition}

The trace distinguishability relation is defined w.r.t.\  pairs of infinite traces, where each trace records all outputs of the environment, building up all the information that is presented to the system.
Two traces are related iff there exists no infinite trace of $p$'s outputs that satisfies the specification for both (environment) input traces.
For example, the traces in the sequence transmission protocol are related by $\tracedistinguishability_{r}$ if they differ on $\vin$ at least once. 
We now turn the distinguishability relation into an assumption for the component.
On traces related by $\tracedistinguishability_p$, the component must observe a difference in its \emph{local} inputs, namely the set $I_p$.
The relation itself only considers infinite traces over all variables that are \emph{not outputs} of the single component, independent of the architecture.
Therefore, the information-flow assumption (IFA) built from the distinguishability relation enforces that on all related (environment input) traces, there is a difference on the component's input:
\begin{definition}[Trace information-flow assumption~\cite{DBLP:conf/cav/FinkbeinerMM22}]\label{def:trace:information:flow:assumption}
Let $\tracedistinguishability_p$ be the trace distinguishability relation for $p$.
The \emph{information flow assumption} $\informationflowassumption_p$ 
is the 2-hyperproperty defined by the relation 
\[
 R_{\informationflowassumption_p} = \{ (\pi, \pi') \in (2^{\variables})^\omega \times (2^{\variables})^\omega \mid
\text{ if } (\pi {\downarrow_{O_e}}, \pi' {\downarrow_{O_e}}) \in \tracedistinguishability_p \mbox{ then } \pi {\downarrow_{I_p}} \neq \pi' {\downarrow_{I_p}} \}
\]
\end{definition}
The trace information-flow assumption is necessary for a component $p$; every 
implementation of the distributed system will satisfy the information-flow assumption from~\cite{DBLP:conf/cav/FinkbeinerMM22}.
In its generality, this definition specifies that the values of the local inputs to $p$ have to be different \emph{at some time point}, without an explicit or implicit deadline.
This is critical in two ways:
On the one hand, liveness specifications, as in the example $\vin \leftrightarrow \LTLeventually \vout$, will never determine an explicit point in time where the information must be present.
On the other hand, safety specifications always include a fixed deadline for the reaction of the component, which, if the information is not present, cannot be met.
This deadline, however, is not accounted for in the information-flow assumption, and an algorithm cannot rely on availability of the information during synthesis.

In ~\cite{DBLP:conf/cav/FinkbeinerMM22} we solve the liveness issue by introducing a time-bounded information-flow assumption.
The time bound acts as a placeholder for the exact time point of information flow.
The locally synthesized receiver must then be correct for all such possible time points.
Because of the arbitrary deadline, the assumptions cannot suffice to find a solution for a safety specification of the receiver either; they are too weak.
We solve this issue by restricting the attention to safety specifications.
Consider, for example, the safety property $\varphi_r = \LTLglobally(\vin \leftrightarrow \LTLnext \vout)$ of our running example.
To satisfy this property, the receiver $r$ must observe the value of $\vin$ on its local inputs in exactly one time step, otherwise, it cannot react to $\vin$ in time.
With this observation, we can state a stronger distinguishability relation over pairs of \emph{finite} traces.

\begin{definition}[Prefix distinguishability]\label{def:prefix:distinguishability}
Let $\varphi_p$ be the safety specification for component $p$. The \emph{prefix distinguishability relation} is defined as
\begin{align*}
\prefixdistinguishability_{\varphi_p} = \{(\pi,\pi') \in (2^{O_e})^m \times & (2^{O_e})^m  , m \in \mathbb{N} \mid \forall \pi_p \in (2^{O_p})^m. \\
&\pi \sqcup \pi_p \nvDash_m \varphi_p 
\text{~or~} \pi' \sqcup \pi_p \nvDash_m \varphi_p \\
\text{~and~} \forall n \in &\mathbb{N}, n < m.\ \exists \pi_p' \in (2^{O_p})^n.\\
&\pi[0\dots n] \sqcup \pi_p' \vDash_n \varphi_p 
\text{~and~} \pi'[0\dots n] \sqcup \pi_p' \vDash_n \varphi_p
\}
\end{align*}
\end{definition}

The first condition states that, for the two related input traces of length~$m$, the specification is violated for all possible output sequences for~$p$ of the same length.
The second condition enforces that $m$ is the first position at which the trace pair must be distinguished, i.e., for all previous positions of the traces, there exists a common output sequence that satisfies the specification on both traces.
Every violation of a safety specification has a \emph{minimal bad prefix}~\cite{DBLP:conf/cav/dAmorimR05}, and hence every violation that originates in the indistinguishability of two traces is captured by~\Cref{def:prefix:distinguishability}.
For liveness specifications, no two traces are related by this definition:
One can inductively reason that for every $(\pi,\pi') \in \tracedistinguishability_{\varphi_p}$ this pair of traces is not in $\prefixdistinguishability_{\varphi_p}$, 
i.e., $(\pi,\pi') \notin \prefixdistinguishability_{\varphi_p}$, since for every chosen $m$, one can find an output trace of $p$ that violates the formula after time point $m$. 

Prefix distinguishability is the core concept of our synthesis method.  
We now show that we can build an automaton that accepts a pair of finite environment output traces iff they are related.
We say that an automaton $\mathcal{A}$ \emph{recognizes} a relation $R$ if $\mathcal{L}(\mathcal{A}) = R$.

\begin{theorem} \label{thm:automatonforsifa}
For a component $p$ with specification $\varphi_p$, there exists a non-de\-ter\-mi\-nis\-tic finite automaton with a doubly exponential number of states in the length of $\varphi_p$ that recognizes the prefix distinguishability relation $\prefixdistinguishability_{\varphi_p}$.
\end{theorem}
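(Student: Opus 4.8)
The plan is to reduce the relation to an emptiness/reachability question about a single product automaton, exploiting that $\varphi_p$ is a safety property. First I would reformulate the two defining conditions through one monotone predicate. For a pair $(\pi,\pi')$ and a length $n$, call position $n$ \emph{jointly live} if there is a common output $\pi_p' \in (2^{O_p})^n$ with $\pi[0\ldots n]\sqcup \pi_p' \vDash_n \varphi_p$ and $\pi'[0\ldots n]\sqcup \pi_p' \vDash_n \varphi_p$; this is exactly the inner clause of the second condition of \Cref{def:prefix:distinguishability}. Because $\varphi_p$ is safety, bad prefixes are closed under extension, so joint liveness is monotone decreasing in $n$: if no common output keeps both prefixes satisfiable at length $n$, none does at any greater length. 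Hence the two conditions collapse to a single threshold statement: $m$ is the first position that is \emph{not} jointly live, i.e.\ position $m-1$ is jointly live and position $m$ is not. The automaton therefore only needs to detect the first position at which joint liveness fails.

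Next I would build the machinery for joint liveness. Starting from an NBA $\A = (2^{O_e\cup O_p}, Q, q_0, \delta, F)$ for $\varphi_p$ with $|Q| = 2^{O(|\varphi_p|)}$, I would delete every state with empty language. Since $\varphi_p$ is safety, a finite combined trace of length $n$ satisfies $\vDash_n \varphi_p$ (is not a bad prefix) iff some run of $\A$ on it survives in the pruned automaton. I then form the product $\mathcal P$ over the alphabet $2^{O_e}\times 2^{O_e}$ whose states are pairs $(q,q')\in Q\times Q$ and whose transition, on reading $(e,e')$, nondeterministically guesses \emph{one} shared output letter $a\in 2^{O_p}$ and moves to any $(q_1,q_1')$ with $q_1\in\delta(q,e\cup a)$ and $q_1'\in\delta(q',e'\cup a)$. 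The crucial point is that the same $a$ is fed to both components, modelling the common output of the definition. A surviving run of $\mathcal P$ of length $n$ is then precisely a choice of output sequence together with a surviving run of $\A$ on each of the two combined traces, so $\mathcal P$ admits a surviving run of length $n$ iff position $n$ is jointly live. Note that $\mathcal P$ has only $|Q|^2 = 2^{O(|\varphi_p|)}$ states.

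Finally, because ``jointly live'' is existential (a surviving run exists) whereas ``not jointly live'' is universal (all runs die), I would determinize $\mathcal P$ by the subset construction, obtaining states $P\subseteq Q\times Q$ where $P_n$ is the set of product states reachable after reading a length-$n$ pair; then position $n$ is jointly live iff $P_n\neq\emptyset$. By the monotonicity established above, once $P$ becomes empty it remains empty, so I make the automaton accept exactly when $P$ first turns empty: the empty-set state is accepting, and I redirect its outgoing transitions into a nonaccepting trap so that only the length-$m$ pair that first reaches $\emptyset$ is accepted. This (deterministic, hence in particular nondeterministic) finite automaton recognizes $\prefixdistinguishability_{\varphi_p}$, and its state count is $2^{|Q\times Q|} = 2^{2^{O(|\varphi_p|)}}$, doubly exponential in $|\varphi_p|$.

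The step I expect to be the main obstacle is keeping the bound at doubly rather than triply exponential. The naive route---determinizing a bad-prefix monitor for each trace separately (already doubly exponential) and then taking a subset construction over \emph{pairs} of such states to realize the common-output quantifier---blows up to a triple exponential. The essential trick is to form the product at the nondeterministic NBA level, where it costs only a squaring, and to determinize a single time afterwards. Secondary points that require care are the justification that for a safety specification ``not a bad prefix'' coincides with ``a surviving run in the pruned NBA,'' and the acceptance gadget that isolates the first distinguishing position instead of accepting every longer pair.
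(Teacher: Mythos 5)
Your proof is correct, and it shares the paper's decisive structural move: form the two-copy product of the NBA for $\varphi_p$, synchronized on a single existentially guessed shared output letter, \emph{before} the one expensive step at the end --- your $\mathcal{P}$ is exactly the paper's product-with-equal-outputs after projection onto the environment variables (its $\mathcal{D}$), and your closing remark about avoiding a triple exponential by ordering things this way is precisely how the paper keeps the bound doubly exponential. The back end, however, is genuinely different. The paper complements the product as a B\"uchi automaton and then converts $\bar{\mathcal{D}}$ to an NFA via the emptiness-per-state construction of Bauer--Leucker--Schallhart, so its second exponential comes from B\"uchi complementation. You never complement an $\omega$-automaton: you apply the emptiness-per-state idea at the base level (pruning empty-language states, so that for a safety property ``not a bad prefix'' coincides with survival of some run), and then your second exponential is a plain subset construction, with ``all runs dead'' --- the universal condition defining distinguishability --- read off as emptiness of the reachable set. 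Your route buys three things: elementary reachability instead of B\"uchi complementation machinery, a deterministic automaton as output, and an explicit treatment of the minimality clause of \Cref{def:prefix:distinguishability} via the monotonicity of joint liveness together with the accept-once-then-trap gadget. That last point deserves emphasis, because the paper's sketch glosses over it: emptiness-per-state applied to $\bar{\mathcal{D}}$ by itself accepts the \emph{existential} property ``extendable to a distinguishable infinite pair'' rather than the \emph{universal} property ``already distinguishable at exactly this step,'' so the determinization and first-failure detection you spell out are doing real work that the paper delegates implicitly to the cited monitoring construction. What the paper's route buys in exchange is a black-box pipeline of standard automata operations (product, projection, complementation, monitor construction), without needing your safety-specific lemma that survival in the pruned product characterizes joint liveness; both routes land on the same doubly exponential state count.
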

\begin{proof}\label{cons:prefix:distinguishability}
We construct a non-deterministic finite automaton (NFA) $\mathcal{F}$ that accepts precisely all pairs of traces over $(2^{O_e})^m \times (2^{O_e})^m$, where $m \in \mathbb{N}$, that are related by $\prefixdistinguishability_{\varphi_p}$.
Let $\varphi_p'$ be the formula $\varphi_p$ where all atomic propositions $a \in AP$ are renamed to $a'$, 
and let $\variables'$ be a set containing a copy $v'$ of every variable $v\in\variables$.
We build the NBA $\mathcal{B} = \mathcal{A}_{\varphi_p} \times \mathcal{A}_{\varphi_p'}$, where $\mathcal{A}_{\varphi_p}$ and $\mathcal{A}_{\varphi'}$ are constructed with a standard LTL-to-NBA translation respectively, and the operator $\times$ builds the product of two NBAs.
$\mathcal{B}$ now accepts all tuples of traces that each satisfy $\varphi_p$.
Let $\mathcal{C}$ be the NBA that restricts the transition relation of $\mathcal{B}$ s.t. edges are only present if the output variables of $p$ are equal $\bigwedge_{o \in O_p} o \leftrightarrow o'$ holds, enforcing that both traces agree on the output while satisfying the specification.
We now existentially project to the set $O_e \cup O_e'$ to build $\mathcal{D}$, whose alphabet does not contain the component's outputs.
To accept the pairs of traces that do not satisfy the formula, we negate $\mathcal{D}$, denoted by $\bar{\mathcal{D}}$.
In the last step of the construction, we transform the NBA $\bar{\mathcal{D}}$ to an NFA $\mathcal{F}$ using the \emph{emptiness per state} construction of~\cite{DBLP:journals/tosem/BauerLS11}. 
This yields an NFA that accepts the prefix distinguishability relation.
The size of the automaton is doubly exponential in the size of the formula. 
The first exponent stems from the LTL to NBA construction, and the second from negating the automaton $\mathcal{F}$.
\end{proof}
Similar to~\Cref{def:trace:information:flow:assumption}, we now turn the safety distinguishability relation into an information-flow assumption that must be guaranteed by the component that observes the respective environment output. 
The assumptions include specific information-flow deadlines for pairs of traces at which the component must observe the information at the latest.
The information-flow assumption, again, is a 2-hyperproperty enforcing that pairs of traces that are related by the prefix distinguishability relation have an observable difference for the component.

\begin{definition}[Prefix information-flow assumptions]
Let $\prefixdistinguishability_{\varphi_p}$ be the prefix distinguishability relation for $p$.
The corresponding \emph{prefix information flow assumption} $\prefixinformationflowassumption_p$ 
is the 2-hyperproperty defined by the relation 
\begin{align*}
    R_{\prefixinformationflowassumption_{p}} = \{(\pi, \pi') \in (2^{\variables})^\omega \times (2^{\variables})^\omega\mid  
    \text{~if~}& \exists m \in \mathbb{N} \text{~s.t.~} (\pi[0\dots m], \pi'[0\dots m]]) \in \prefixdistinguishability_{\varphi_p} \\
    &\text{~then~} \pi{\downarrow_{I_p}}[0\dots m-1] \neq \pi'{\downarrow_{I_p}}[0\dots m-1]
    \}
\end{align*}
\end{definition}
On all finite trace pairs in the prefix distinguishability relation $\prefixdistinguishability_{\varphi_p}$, there must be a difference on $I_p$ before the deadline $m$.
Restricting the observable difference to happen before the \emph{deadline} $m$ is crucial for the receiving component.
Whereas the prefix distinguishability relation relates finite traces, the prefix information-flow assumption is a hyperproperty over infinite traces. 
Unsurprisingly, every implementation of a distributed system satisfying safety LTL specifications satisfies the corresponding prefix information-flow assumption.

\begin{lemma}\label{lem:pifa:necessary}
The prefix information-flow assumption is necessary for safety LTL specifications.
\end{lemma}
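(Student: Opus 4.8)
The plan is to show that every implementation $(\transitionsystem_p,\transitionsystem_q)$ whose traces satisfy the safety specification $\varphi_p$ also satisfies the 2-hyperproperty $\prefixinformationflowassumption_p$; concretely, that every pair of traces $\pi,\pi'\in\mathit{Traces}(\transitionsystem_p,\transitionsystem_q)$ lies in $R_{\prefixinformationflowassumption_p}$. I would fix such an implementation and such a pair and unfold membership in $R_{\prefixinformationflowassumption_p}$: I may assume the antecedent, namely that there is some $m\in\mathbb{N}$ with $(\pi{\downarrow_{O_e}}[0\dots m],\pi'{\downarrow_{O_e}}[0\dots m])\in\prefixdistinguishability_{\varphi_p}$ (reading the bare $\pi[0\dots m]$ of the definition as the projection onto $O_e$, which is what the typing of $\prefixdistinguishability_{\varphi_p}$ forces), and I must derive the consequent $\pi{\downarrow_{I_p}}[0\dots m-1]\neq\pi'{\downarrow_{I_p}}[0\dots m-1]$. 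I would argue by contradiction and assume instead that the two traces agree on $p$'s local inputs up to the deadline, i.e. $\pi{\downarrow_{I_p}}[0\dots m-1]=\pi'{\downarrow_{I_p}}[0\dots m-1]$.

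The central step is a determinism observation about the component transition system $\transitionsystem_p$. Since $\transitionsystem_p$ is a $2^{O_p}$-labeled $2^{I_p}$-transition system, the state reached after $k$ steps, and hence the output $o^p(t_k)\in 2^{O_p}$ produced at time $k$, is a function only of the $I_p$-input prefix received strictly before time $k$. I would make this precise by an easy induction on the length of the initial run: equal $I_p$-inputs on $[0\dots k-1]$ force equal states $t_0,\dots,t_k$, and therefore equal outputs on $[0\dots k]$. Applying this with the assumed equality of the $I_p$-projections on $[0\dots m-1]$ yields $\pi{\downarrow_{O_p}}[0\dots m]=\pi'{\downarrow_{O_p}}[0\dots m]$; call this common finite output sequence $\pi_p^\ast$.

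Now I would play the two halves of \Cref{def:prefix:distinguishability} against correctness. On one hand, the first condition of the definition, instantiated with the witness $\pi_p^\ast$, gives that $\pi{\downarrow_{O_e}}[0\dots m]\sqcup\pi_p^\ast\nvDash_m\varphi_p$ or $\pi'{\downarrow_{O_e}}[0\dots m]\sqcup\pi_p^\ast\nvDash_m\varphi_p$; that is, at least one of these finite traces is a bad prefix of $\varphi_p$. On the other hand, both $\pi$ and $\pi'$ belong to $\mathit{Traces}(\transitionsystem_p,\transitionsystem_q)\subseteq\varphi_p$, so each is an infinite model of the safety formula $\varphi_p$, and by definition of a bad prefix no prefix of a satisfying infinite trace can be bad. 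Since $\pi{\downarrow_{O_e\cup O_p}}[0\dots m]=\pi{\downarrow_{O_e}}[0\dots m]\sqcup\pi_p^\ast$ and likewise for $\pi'$, both finite traces satisfy $\varphi_p$ in the finite sense $\vDash_m$, contradicting the disjunction and completing the argument.

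I expect the main obstacle to be the bookkeeping at the boundary between finite and infinite semantics, together with the index alignment: I must check that ``inputs agree on $[0\dots m-1]$'' is exactly what forces ``outputs agree on $[0\dots m]$'', matching the length at which $\prefixdistinguishability_{\varphi_p}$ quantifies over output sequences, and that finite satisfaction $\vDash_m$ coincides with ``not a bad prefix'' for safety formulas so that satisfaction of the infinite trace transfers to its prefixes. The determinism induction and the realization that the shared output prefix $\pi_p^\ast$ is simultaneously forbidden by distinguishability yet mandated by correctness are the crux; once these are aligned, the rest is routine unfolding of definitions.
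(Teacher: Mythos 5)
Your proposal is correct and follows essentially the same route as the paper's proof: exploit determinism of $\transitionsystem_p$ so that equal $I_p$-input prefixes force a common output prefix, then play the first condition of \Cref{def:prefix:distinguishability} against $\mathit{Traces}(\transitionsystem_p,\transitionsystem_q)\subseteq\varphi_p$ to conclude that one of the two traces would have a bad prefix of the safety formula. Your write-up is actually more careful on the index alignment (the paper's proof writes $[0\ldots m+1]$ where the definition of $\prefixinformationflowassumption_p$ yields agreement only on $[0\dots m-1]$) and makes the determinism induction explicit, but these are refinements of the same argument, not a different approach.
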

\begin{proof}
Assume that there exists an implementation $(T_p, T_q)$ satisfying the safety LTL specifications $\varphi_p$ and $\varphi_q$ but not $\prefixinformationflowassumption_p$ and $\prefixinformationflowassumption_q$.
Since $\prefixinformationflowassumption_p$ is not satisfied, there exists a pair of traces $\pi, \pi'$ such that $(\pi {\downarrow_{O_e}}[0\ldots m], \pi' {\downarrow_{O_e}}[0 \ldots m]) \in  \prefixdistinguishability_{\varphi_p}$ 
and
$\pi {\downarrow_{I_p}}[0 \ldots m+1] = \pi' {\downarrow_{I_p}}[0 \ldots m+1]$. 
The deterministic system must therefore choose the same output for the timestep $m+1$ since the inputs are the same.
This contradicts the assumption: either $\pi[0 \ldots m+1]$ or $\pi'[0\ldots m+1]$ is a minimal bad prefix since, otherwise, the traces would not be related by the prefix distinguishability relation.
\end{proof}
We are now ready to return to the sequence transmission example. 
The prefix distinguishability automaton for $\LTLglobally (\vin \leftrightarrow \LTLnext \vout)$ is depicted in \Cref{fig:sequencetransmission:disting}.
The automaton accepts a 2-hyperproperty whose alphabet is a pair of valuations of  $\vin$. 
Note that the communication bit from $t$ to $r$ is not restricted by the prefix distinguishability.
The automaton terminates whenever a sequence of inputs must be distinguished.
For example, starting in the initial state, the input words $\vin$ on $\pi$ and $\neg \vin$ on $\pi'$ lead immediately to an accepting state; these finite traces need to be distinguished.
However, if $\vin$ is equivalent on both traces, the automaton stays in the initial non-accepting state.
By abuse of notation, we use $X_p$ for $X_{\varphi_p}$, e.g.,  
{$\prefixdistinguishability_{p}$ for $\prefixdistinguishability_{\varphi_p}$}, if $\varphi_p$ is clear from context.

The automata for the prefix distinguishability and the prefix infor\-mation-flow assumption can be very complex; even if two traces are different at point~$n$, it can be decided at position $n+m$ if the difference of the inputs results in a necessary information flow, and the automaton might need to store the observed difference during all $m$ intermediate steps.
We evaluate the size of the prefix distinguishability automaton empirically in~\Cref{sec:Implementation}.
With the prefix information-flow assumption, we could construct a hyperproperty synthesis problem similar to \cite{DBLP:conf/cav/FinkbeinerMM22}.
In practice, however, synthesis from hyperproperties is largely infeasible, because it hardly scales to more than a few system states~\cite{HyperBosy}.
In the following, we show that this problem can be avoided by reducing the compositional synthesis problem to the much more practical synthesis from {trace properties}.

\section{Unbounded Communication in Distributed Systems}\label{unbounded:communication:in:distributed:systems}
Computing the information flow between the components in a distributed system, as shown in~\Cref{sec:safety:information:flow}, is the first step for compositional synthesis.
In the second and more complex step, the synthesis procedure needs to guarantee (1) that the component that observes the information actually transmits the information, and (2) that the component requiring the information correctly assumes the reception.
We construct an \emph{assume specification}, which ensures that the component correctly assumes the information flow, 
and a \emph{guarantee specification}, which enforces the correct transmission of information. 

\subsection{Receiving Information}\label{sec:compositional:synthesis:with:information:flow:bounds}
A component cannot realize its specification only based on its local observations; it needs to assume that the required information is transmitted during execution.
The prefix information-flow assumption is one class of necessary assumptions, i.e., every transmitter implementation must satisfy it, and the hyperproperty can be assumed without losing potential solutions.
In many cases, this assumption is also sufficient; if the receiver assumes this exact information flow, the local synthesis problem is realizable.
During synthesis, we do not know what actual information the component currently has.
The synthesis procedure only has \emph{partial information} of all environment outputs.
Which information is actually transmitted at which time point is finally decided by the synthesis process of the \emph{transmitter}. 
However, the receiver's implementation must be correct for every possible information in every step.
We, therefore, collect all traces at a position that do not need to be distinguished by component $p$ at time $n$, i.e., there exists a prefix of $p$'s outputs that works on all traces.

\begin{definition}[Prefix information class]\label{def:information:class}
    Let $\prefixdistinguishability_p$ be the prefix distinguishability relation for $p$. The information class of a trace $\pi$ at position $n \in \mathbb{N}$ is the set of traces
    $[\trace]^n_p = (2^{O_e})^n \backslash \{\pi' \in (2^{O_e})^n \mid (\pi, \pi') \in \prefixdistinguishability_p\}$
\end{definition}

We now construct a trace property that, given an information class $\informationclass^n$, enforces that the output by the component is correct for all traces in the information class $\informationclass^n$.
This property specifies exactly one step of outputs, namely $n+1$.
Since we consider safety LTL properties, it is sufficient to incrementally specify the outputs according to the satisfaction of the LTL formula.

\begin{definition}[Information class specification]\label{def:information:class:specification}
    Let $\varphi_p$ be the LTL specification for component $p$, $n \in \mathbb{N}$, and let $\informationclass_n$ be a prefix information class at position $n-1$.
    The information class specification $\informationclassspecifcation_p^n \subseteq (2^{\variables \backslash O_q})^\omega$ is defined as
    \vspace{-5pt}
    \begin{align*}
    \informationclassspecifcation_p^n = \{\trace_e \sqcup \trace_o \mid \trace_e \in (2^{\variables \backslash O_p})^n&, \trace_o \in (2^{O_p})^n\\
     &\text{ s.t. } \forall \pi_e' \in \informationclass_{n-1}. \trace_e'[0\ldots n] \sqcup \trace_o[0\ldots n] \vDash_n \varphi_p\}.
    \end{align*}
\end{definition}
The output traces in $\informationclassspecifcation^n_p$ need to be correct for every environment output trace that is in the information class. 
Here, if an environment output trace is not in the information class, we do not restrict any behavior.
We now introduce a crucial assumption: That the number of information classes over all time steps is \emph{bounded}.
In general, this is not necessary: one can distinguish every trace from every other trace, such that the information classes increase in every time step.
However, if the number of information classes is bounded, we present an effective heuristic for constructing them on the prefix distinguishability assumption in~\Cref{sec:compositional:synthesis:with:information:flow:bounds}.
Each information class $\informationclass$ (which is now \emph{not} parametric in the time point) is then a set of finite traces $(2^{O_e})^\star$, which is exactly the set of traces in each step that do not need to be distinguished by a component.
Consider, for example, the sequence transmission specification $\varphi = \LTLglobally (\vin \leftrightarrow \LTLnext \vout)$.
The information classes w.r.t.~\Cref{def:information:class} are all traces that are equal on the environment outputs up to time-point $n-1$.
This builds infinitely many information flow classes.
It is, however, possible to reduce the information classes to a finite representation.
In our example, it is sufficient to check for the previous position of the traces: all finite traces that are equal at $n-1$ do not need to be distinguished.
This yields two information classes, one for $\vin$ at the previous step and one for $\neg \vin$ at the previous step.
The NFA accepting one of them is depicted in \Cref{fig:information:class}.
With the assumption that we are given a finite set of information classes as subsets of $(2^{O_e})^*$, we are able to build an \emph{assume specification}, which assumes that information classes are received if necessary, and can react to any possible \emph{consistent} sequence of information classes.
The information classes $\informationclasses$ are now part of the alphabet for the input traces and we use $\informationclass$ for refering to a specific information class and as an atomic proposition.

\begin{definition}[Assume specification]\label{def:assume:specification}
Let $\varphi_p$ be the component specification and $\informationclasses$ be the finite set of information classes, where each $\informationclass \in \informationclasses$ is a subset of $(2^{O_e})^*$. The trace property $\assumespecification \subseteq (\informationclasses \cup 2^{O_p})^\omega$ is defined as
\vspace{-3pt}
\begin{align*}
    \assumespecification_p^\informationclasses = \{\pi_\informationclasses \cup \pi_o \mid \pi_\informationclasses \in \informationclasses^\omega, \pi_o \in &(2^{O_p})^\omega, \forall n \in \mathbb{N}.\forall \informationclass \in \pi_\informationclasses[n-1].\\ \forall \pi_e[0 \ldots n-1] \in \informationclass. 
    &\text{ if } \pi_\informationclass \text{ is consistent, then } \pi_e \sqcup  \pi_o[0\ldots n] \vDash_n \varphi_p\},
\end{align*}
where a finite prefix $\pi_\informationclasses \in \informationclasses^n$ is consistent if it holds that for all $0 \leq m < n$, all finite traces in $\pi_e[0 \ldots m]$ have a prefix in $\pi_e[0 \ldots m-1]$.
\end{definition}

The assume specification collects, for a sequence of information classes, all component outputs that are correct for all environment outputs in this information class.
The consistency of input traces specifies the correct reveal of information classes. 
It cannot be the case that a trace that was distinguishable from the current trace in step $n-1$ is indistinguishable in $n$.
Note that a correct transmitter will implement only consistent traces.
Let's assume we are given the information classes $\informationclasses = \{\informationclass, \informationclass'\}$ for the sequence transmission problem, where $\informationclass = (\{\vin\}, \{\neg \vin\})^*\{\vin\}$ and $\informationclass' = (\{\vin\}, \{\neg \vin\})^*\{\neg \vin\}$. 
These classes suffice to implement the receiver: whenever the trace over $\informationclasses^n$ ends in $\informationclass$, the receiver has to respond with $\vout$ and it should respond with $\neg \vout$ whenever the trace ends in $\informationclass'$.
Each information class $\informationclass$ can be split into the information classes $\informationclass_n$ by fixing the length of the traces to $n$.

\begin{lemma}\label{lem:assume:specification}
Let $\informationclasses_p$ be the finite set of information classes for component $p$.
Every implementation satisfying the assume specification $\assumespecification_p^\informationclasses$ also satisfies the information class specification $\informationclassspecifcation^n_p$ for all $n \in \mathbb{N}$ and $\informationclass \in \informationclasses_p$.
\end{lemma}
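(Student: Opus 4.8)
The plan is to prove the lemma by directly unfolding the two definitions and observing that the membership condition of $\informationclassspecifcation^n_p$ is, up to reindexing, precisely the $n$-th conjunct of the condition defining $\assumespecification_p^\informationclasses$, specialized to the information class that is named at position $n-1$. First I would fix an implementation $\hyperimplementation$ with $\hyperimplementation \models \assumespecification_p^\informationclasses$, together with an arbitrary $n \in \mathbb{N}$ and an arbitrary information class $\informationclass \in \informationclasses_p$, and reduce the goal $\hyperimplementation \models \informationclassspecifcation^n_p$ to a statement about a single output step. The point of the lemma is exactly that the assume specification bundles, via its leading $\forall n \in \mathbb{N}$ quantifier, the per-step correctness requirements for all time points at once, so that each individual $\informationclassspecifcation^n_p$ is obtained by reading off one conjunct.

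Before unfolding I would make the alphabet correspondence explicit, since this is where the two statements live on different alphabets: $\assumespecification_p^\informationclasses$ is a property of sequences over $\informationclasses \cup 2^{O_p}$, whereas $\informationclassspecifcation^n_p$ is stated over $2^{\variables \setminus O_p}$ combined with $2^{O_p}$. The bridge is that a concrete environment prefix $\pi_e$ determines, at each step $m$, an information class that contains it, yielding an information-class input sequence $\pi_\informationclasses$ that the hyper implementation reads to produce its output $\pi_o$; I then check $\informationclassspecifcation^n_p$ against the concrete traces $\pi_e \sqcup \pi_o$ obtained this way.

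The core of the argument is then a short specialization. I would fix any such concrete trace whose induced information-class sequence has $\informationclass$ at position $n-1$ (if no such trace exists, $\informationclass$ imposes no constraint at step $n$ and the case is vacuous). Because $\hyperimplementation \models \assumespecification_p^\informationclasses$, the defining condition of $\assumespecification_p^\informationclasses$ holds for this sequence at the chosen $n$: for every $\pi_e'[0\dots n-1] \in \informationclass$ we obtain $\pi_e' \sqcup \pi_o[0\dots n] \vDash_n \varphi_p$. This is exactly the membership requirement for $\informationclassspecifcation^n_p$ with $\trace_o = \pi_o$ and $\informationclass_{n-1} = \informationclass$, so the trace lies in $\informationclassspecifcation^n_p$; since $n$ and $\informationclass$ were arbitrary, the lemma follows.

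The step I expect to be the main obstacle is discharging the \emph{consistency} side condition that guards the guarantee in $\assumespecification_p^\informationclasses$: the assume specification only constrains the output along consistent information-class sequences, while $\informationclassspecifcation^n_p$ demands correctness for every trace in $\informationclass$ outright. I would resolve this by arguing that the information-class sequence induced by a genuine concrete environment prefix is automatically consistent, because the information classes only refine as time progresses — so that every relevant finite trace in the class named at step $m$ has its length-$(m-1)$ prefix in the class named at step $m-1$, which is exactly the prefix-containment required for consistency. Establishing this refinement property of the bounded family $\informationclasses_p$ (and checking that it matches the monotone-distinguishability reasoning already used in \Cref{lem:pifa:necessary}, as illustrated in the running example where the two classes split cleanly by the last input bit) is the one place where more than definition-chasing is needed.
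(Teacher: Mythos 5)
Your proposal is correct and takes essentially the same route as the paper, whose entire proof is the one-line observation that $\assumespecification_p^\informationclasses$ ``collects all information class specifications for the given set of information classes'' --- i.e.\ exactly your reading-off of the $n$-th conjunct of the leading $\forall n$ quantifier, specialized to the class named at position $n-1$. The alphabet bridge and the discharge of the consistency guard that you spell out are glossed over by the paper (which only remarks that a correct transmitter produces consistent class sequences), so your extra care there refines rather than departs from its argument.
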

This lemma follows directly from the definition of the assume specification: It collects all information class specifications for the given set of information classes.
Note that correctness is only specified for the set of information classes, not the information flow assumption.
If the information classes are not total, in the sense that all distinguished traces are in one of the classes, then the receiver is not correct for \emph{all} implementations of the sender.

\subsection{Transmitting Information}
While a transmitter has to satisfy its local specification, it must also guarantee that the information flow that the receiver relies on is transmitted in time. 
In general, this is, again, a hyperproperty synthesis problem:
The combination of the local specification of $q$ and the prefix information-flow assumption of $p$ is the 2-hyperproperty that the implementation of $q$ needs to satisfy.
However, we propose a framework for more involved (incomplete) trace property synthesis algorithms, potentially speeding up the transmitter synthesis significantly.
In contrast to the receiver, the transmitter of information can choose the synthesis strategy; As long as the transmitter satisfies the information-flow assumption, the receiver will assume this implementation as feasible and can react to the information flow correctly during composition.
We specify a class of trace properties s.t. each element specifies a subset of the implementations that satisfy a correct transmitter.

\begin{definition}[Guarantee specification]\label{def:guarantee:specification}
Let $p$ and $q$ be components and $\informationflowassumption{\varphi_p}$ be the IFA for $\varphi_p$.
The set $\mathbb{G}_{\prefixdistinguishability_p} \subseteq (2^{I_q \cup O_q})^\omega$ is a guarantee specification if all $2^{O_q}$-labeled $2^{I_q}$-transition systems that satisfy $\mathbb{G}$ also satisfy $\informationflowassumption_p$.
\end{definition}

The first crucial difference between the guarantee specification and the assume specification in~\Cref{def:assume:specification} is that the transmitter must guarantee a difference on the traces in $\prefixdistinguishability_p$ whereas the receiver can only assume to observe a difference whenever
$\prefixdistinguishability_p$ relates two traces.
Additionally, the guarantee specification can specify a subset of implementations of all possible transmitters.
We show this difference in the following example:
Consider our running example specification $\varphi = \LTLglobally (\vin \leftrightarrow \LTLnext\vout)$. 
One of the (infinitely) many guarantee specifications can be 
the set of traces specified by the LTL formula 
$\LTLglobally (\vin \leftrightarrow \neg\vc)$, which enforces that every $\vin$ is communicated to the receiver by setting $\vc$ to \emph{false}.

It remains to show that we can construct guarantee specifications for prefix information-flow assumptions effectively.
We will highlight two useful guarantee specifications, one that is implemented in our prototype and one that utilizes the information classes.
We begin with the \emph{full-information specification}.
It forces the transmitter to send, if possible, all information and therefore reduces the distributed synthesis problem to monolithic synthesis.
This concept was already presented in~\cite{PnueliR90} where it was called adequate connectivity and later extended by Gastin et al.~\cite{gastin2009distributed}.
\begin{definition}[Full-information specification]\label{def:full:information:specification}
Let $p$ and $q$ be components, and $f: O_e \cap I_q \rightarrow O_q \cap I_p$ be a bijection.
The \emph{full-information specifciation} for $q$ is the trace property
\begin{align*}
\mathbb{F}_p = \{\pi_e \sqcup \pi_o \mid \pi_e \in (2^{O_e \cap I_q})^\omega, \pi_p \in (&2^{O_q \cap I_p})^\omega, \forall v \in (O_e \cap I_q),\\
\text{ either } &\forall n\in \mathbb{N}. v \in \pi_{e}[n] \text{ iff } f(v) \in \pi_{o}[n+1]\\
\text{ or } &\forall n\in \mathbb{N}. v \in \pi_{e}[n] \text{ iff } f(v) \notin \pi_{o}[n+1] \}
\end{align*}
\end{definition}

This specification forces the sender to assign exactly one value of a communication variable to every input variable.
This choice must hold for every point in time and can not be changed, ensuring that every input combination is uniquely represented by the communication variables.
The full-information specification is a guarantee specification for every possible information-flow assumption.
Since \emph{every} input bit is guaranteed to be transmitted, every different input trace can be distinguished, not only the ones required to be distinguished by the prefix distinguishability relation.
The full-information specification is a sufficient condition for realizing the sender; if there is an implementation for satisfying $\fullinformationspec$, then this implementation is a correct sender.
It is not a necessary specification, the sender might be able to encode the inputs to a smaller set of communication variables.
The second guarantee specification is based on the information classes.

\begin{definition}[Information Class Guarantee]\label{def:information:class:guarantee}
Let $\informationclasses_p'$ be the finite set of information classes of $p$ projected to the inputs of $q$, s.t. the information classes $\informationclass \in \informationclasses_p'$ are subsets of $(2^{O_e\cap I_q})^*$. Let furthermore $f: \informationclasses \rightarrow 2^{O_q \cap I_p}$ be a bijection. The information class guarantee $\mathbb{I}_q^\informationclasses \subseteq (2^{(O_e \cap I_q) \cup (O_q\cap I_p))})^\omega$ is defined as
\begin{align*}
    \informationclassspecification_q^\informationclasses = \{\pi_e \cup \pi_o \mid \pi_e \in (2^{O_e \cap I_q})^\omega, &\pi_o \in (2^{O_q \cap I_p})^\omega, \forall n \in \mathbb{N}, \forall \informationclass \in \informationclasses_p'\\ &\text{ if } \pi_e[0 \ldots n] \in \informationclass \text{ then } f(c) \in \pi_o[n+1]\}.
\end{align*}
\end{definition}

The specification tracks, for an environment output trace $\pi_e$, the current information class.
Whenever the finite trace is in an information class $\informationclass$, the transmitter must set the combination of its outputs to the values as specified by the bijection $f$.
The receiver $p$ can therefore observe $\informationclass$ by decoding the outputs of $q$ on $O_q \cap I_p$.
Similar to the assume specification, the correctness of the information class guarantee depends on the information classes:

\begin{lemma}
If a set of information classes $\informationclasses$ is sufficient to synthesize $\varphi_p$ then $\informationclassspecification_q^\informationclasses$ is a guarantee specification for $\varphi_p$.
\end{lemma}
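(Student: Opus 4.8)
The plan is to unfold Definition~\ref{def:guarantee:specification} directly: I fix an arbitrary $2^{O_q}$-labeled $2^{I_q}$-transition system $\transitionsystem_q$ whose traces all lie in $\informationclassspecification_q^{\informationclasses}$ and show that every pair of traces $\pi, \pi'$ of the system it induces satisfies the prefix information-flow assumption $\prefixinformationflowassumption_p$ (the assumption required in Definition~\ref{def:guarantee:specification}). So I assume a deadline $m$ with $(\pi{\downarrow_{O_e}}[0\dots m], \pi'{\downarrow_{O_e}}[0\dots m]) \in \prefixdistinguishability_{\varphi_p}$ and must exhibit a difference $\pi{\downarrow_{I_p}}[0\dots m-1] \neq \pi'{\downarrow_{I_p}}[0\dots m-1]$ on $p$'s local inputs strictly before the deadline.

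The heart of the argument is a separation claim extracted from the hypothesis that $\informationclasses$ is sufficient to synthesize $\varphi_p$. I read sufficiency as the existence of a receiver that reads only the stream of information classes (the hyper implementation of Figure~\ref{fig:hyper:receiver}) and satisfies $\varphi_p$ on every consistent class sequence, as formalized by the assume specification $\assumespecification_p^{\informationclasses}$ of Definition~\ref{def:assume:specification}. I claim this forces any prefix-distinguishable pair to split into two distinct classes early: there is a step $n \le m-2$ and classes $\informationclass \neq \informationclass'$ in $\informationclasses_p'$ with $\pi{\downarrow_{O_e \cap I_q}}[0\dots n] \in \informationclass$ and $\pi'{\downarrow_{O_e \cap I_q}}[0\dots n] \in \informationclass'$. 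The proof is by contraposition. If instead both prefixes stayed in the same class at every step $0, \dots, m-2$, then the two executions would present the receiver with identical class streams over this range; since the hyper implementation depends on nothing but these classes, it would be forced to emit the same output at every step $0, \dots, m-1$ (the class seen at step $k$ fixes the output at step $k+1$, matching the indexing of Definitions~\ref{def:assume:specification} and~\ref{def:information:class:guarantee}). This single common output of length $m$ would satisfy $\varphi_p$ on both prefixes, contradicting $(\cdot,\cdot) \in \prefixdistinguishability_{\varphi_p}$, which by Definition~\ref{def:prefix:distinguishability} asserts that no length-$m$ output sequence of $p$ works for both.

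It then remains to convert the class split into an observable input difference for $p$. By Definition~\ref{def:information:class:guarantee}, because $\transitionsystem_q$ satisfies $\informationclassspecification_q^{\informationclasses}$ and $\pi{\downarrow_{O_e \cap I_q}}[0\dots n] \in \informationclass$, the transmitter drives the communication variables $O_q \cap I_p$ at step $n+1$ to the valuation $f(\informationclass)$, and likewise to $f(\informationclass')$ on $\pi'$. Since $f$ is a bijection and $\informationclass \neq \informationclass'$, the two valuations $f(\informationclass)$ and $f(\informationclass')$ differ, so $\pi$ and $\pi'$ disagree at position $n+1$ on $O_q \cap I_p \subseteq I_p$. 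With $n \le m-2$ this difference occurs at position $n+1 \le m-1$, hence $\pi{\downarrow_{I_p}}[0\dots m-1] \neq \pi'{\downarrow_{I_p}}[0\dots m-1]$, which is exactly what $\prefixinformationflowassumption_p$ requires. As $\pi, \pi'$ and $\transitionsystem_q$ were arbitrary, $\informationclassspecification_q^{\informationclasses}$ is a guarantee specification for $\varphi_p$.

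The step I expect to be the main obstacle is the separation claim together with its index bookkeeping: I must be sure that sufficiency really rules out two prefix-distinguishable traces sharing a class all the way through step $m-2$, and that the class-to-output offset lands the forced difference at position $n+1 \le m-1$ rather than at $m$. A secondary point needing care is reconciling the two granularities of classes: prefix distinguishability lives over the full environment alphabet $O_e$, whereas the guarantee and the bijection $f$ operate on the $q$-observable projection $O_e \cap I_q$ and on $\informationclasses_p'$. I would discharge this by noting that sufficiency of $\informationclasses$ presupposes that the distinctions demanded by $\prefixdistinguishability_{\varphi_p}$ are already visible in $O_e \cap I_q$, so the projected classes separate exactly the pairs that the receiver must tell apart.
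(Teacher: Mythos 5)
Your proposal is correct and takes essentially the same route as the paper, which supports this lemma only with a one-sentence contrapositive remark (insufficiency means either unrealizability or an information class that ``falsely contains two traces that need to be distinguished''): your separation claim is precisely the rigorous form of that remark, completed by the observation that the bijection $f$ of \Cref{def:information:class:guarantee} turns a class split into a difference on $O_q \cap I_p \subseteq I_p$ before the deadline. The index bookkeeping you flag as the main risk cannot be adjudicated against the paper, since it gives no detailed proof and is itself inconsistent about offsets (compare the $[0\dots m-1]$ deadline in the prefix information-flow assumption with the $[0\dots m+1]$ equality used in the proof of \Cref{lem:pifa:necessary}); your self-consistent choice of offsets is at least as careful as the source.
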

If providing the information classes at every step is not sufficient for synthesis, then either the specification is unrealizable or at least one information class falsely contains two traces that need to be distinguished.
The assume and guarantee specifications in~\Cref{unbounded:communication:in:distributed:systems} build the foundation for synthesizing local components that satisfy the local specification and the information-flow assumption.
In most distributed systems, however, components are not solely receivers nor transmitters, but both simultaneously.
We now define local implementations that are correct w.r.t. information classes, called safety hyper implementations.

\subsection{Safety Hyper Implementations}
Hyper implementations were introduced in~\cite{DBLP:conf/cav/FinkbeinerMM22} specifying local implementations of a distributed system that are correct for all possible implementations of all other components.
The hyper implementations observe all inputs of the environment but are forced to react to them only if necessary, without restricting the possible solution space of other components.
For example, the implementation of the receiver $r$ in the sequence transmission protocol is a $2^{O_r}$-labeled $2^{I_r}$-transition system, but any locally synthesized solution for r must react to inputs only observed by $t$.
We use the \emph{information classes} of~\Cref{def:information:class} to specify and synthesize a different approach to hyper implementations.
Recall that we assume a bounded number of information classes $\informationclasses$.

\begin{definition}[Safety hyper implementation]\label{def:safety:hyper:implementation}
Let $p$ and $q$ be components, $e$ be the environment, and $\informationclasses_p$ be a set of information classes.
A safety hyper implementation $\hyperimplementation_p$ of $p$ is a $2^{O_p}$-labelled $\informationclasses_p \cup 2^{I_p}$-transition system.
\end{definition}
The safety hyper implementation branches over the information classes and the local inputs to $p$ and reacts with local outputs.
The safety hyper implementation of our running example is depicted in~\Cref{fig:hyper:receiver}.
Compared to (non-safety) hyper implementations in~\cite{DBLP:conf/cav/FinkbeinerMM22}, the safety hyper implementations do not contain a special input variable $\vt$ that signalizes the reception of information. 
This deadline is explicitly present in the prefix distinguishability relation and can be computed on the automaton representing the prefix distinguishability relation.
Since we consider safety properties, there always exists a pre-determined time frame between the environment input and the necessary reception of the information - a fact that we utilize heavily during hyper implementation construction.
We now formalize when a safety hyper implementation is correct.
\begin{definition}[Correctness of safety hyper implementation]
Let $p$, $q$, and $e$ be the components of a distributed system and the environment, and $\varphi_p$, $\varphi_q$ be the local specifications. A \emph{safety hyper implementation} $\hyperimplementation_p$ is correct if it implements $\assumespecification_{\varphi_p}$ and some $\guaranteespecification_{\varphi_q}$.
\end{definition}
Correct hyper implementations of $p$ are compatible with all correct implementations of $q$, i.e., all possible sequences of information provided by \emph{some} transmitter, and implement \emph{one} solution to the information-flow assumption of~$q$.
Since assume and guarantee specifications are trace properties, we can synthesize safety hyper implementations with trace property synthesis algorithms once the B\"uchi automata for the specifications are constructed.
\begin{figure}[t]
\begin{center}
\resizebox{0.91\textwidth}{!}{
\centering
\tikzstyle{state}=[draw, rectangle, minimum width=.8cm, 
minimum height = .8cm, rounded corners=1mm, align=center, thick, draw=black!80]

\begin{tikzpicture}[->,>=stealth',shorten >= 1pt,auto]
\node[state, draw = none, minimum width=0cm, 
minimum height = 0cm] (f1) {%
  $\varphi_p$%
};
\node[state, draw = none, minimum width=0cm, 
minimum height = 0cm] (f2) [below = 1.3 of f1] {%
  $\varphi_q$%
  };

\node[state] (p1) [right = 0.7 of f1] {%
  $\prefixdistinguishability_{\varphi_p}, \informationclasses_{\varphi_p}$%
  };

\node[state] (p2) [right = 0.7 of f2] {%
  $\prefixdistinguishability_{\varphi_q}, \informationclasses_{\varphi_q}$%
  };

\node[state] (t1) [right = 1.5 of p1, align=center] {%

${\assumespecification_p}$%
};

\node[state] (t2) [right = 1.5 of p2, align=center] {%

${\assumespecification_q}$%
};

\node[state, draw=none] (t222) [right = -0.1 of t2, align=center] {%

$\cap$%
};

\node[state, draw=none] (t111) [right = -0.1 of t1, align=center] {%
$\cap$%
};

\node[state, fill = white] (t11) [right = -0.1 of t111, align=right] {%

Full Inf\\
${\guaranteespecification_q}$ ~~Inf Class\\
$\cdots$%
};

\node[state] (t22) [right = -0.1 of t222, align=right] {%
Full Inf\\
${\guaranteespecification_q}$ ~~Inf Class\\
$\cdots$%
};

\node (box1) [state,  fit={(t11)(t1)}] {};
\node (box2) [state,  fit={(t22)(t2)}] {};

\node[state] (h1) [right = 1.5 of t11] {%
$\hyperimplementation_p$%
};

\node[state] (h2) [right = 1.5 of t22] {%
$\hyperimplementation_q$%
};

\node[state, draw = none, minimum width=0cm, 
minimum height = 0cm] (s1) [right = 1.5 of h1] {%
$T_p$%
};

\node[state, draw = none, minimum width=0cm, 
minimum height = 0cm] (s2) [right = 1.5 of h2] {%
$T_q$%
};

\path (f1) edge[thick, transform canvas={yshift=0mm}] node[above] {%
      }  (p1);
      
\path (f2) edge[thick, transform canvas={yshift=0mm}] node[above] {%
      }  (p2)   ;   

\path (p1) edge[thick, transform canvas={yshift=3pt}] node[above] {%
      }  (box1.west);
      
\path (p1.south east) edge[thick, transform canvas={yshift=3pt}] node[above] {%
      }  (box2.west);      
\path (p2) edge[thick, transform canvas={yshift=0mm}] node[above] {%
      }  (box2)      
      ;
\path (p2.north east) edge[thick, transform canvas={yshift=0mm}] node[above] {%
      }  (box1.west)      
      ;

\path (box1) edge[thick, transform canvas={yshift=0mm}] node[above] {%
    synth.
      }  (h1);
      
\path (box2) edge[thick, transform canvas={yshift=0mm}] node[above] {%
synth.
      }  (h2);

\path (h1) edge[thick, transform canvas={yshift=0mm}] node[left,xshift = -6pt,  yshift = 1pt] {%
    compose
      }  node [right, yshift=2pt, xshift=1pt]{decompose}(s2);
      
\path (h2) edge[thick, transform canvas={yshift=0mm}] node[right] {%
      }  (s1);
\end{tikzpicture}
}
\vspace{-10pt}
\end{center}
\caption{The steps in the algorithm for compositional synthesis with prefix information flow assumptions.}
\label{fig:compositional:synthesis:overview}
\vspace{-0.3cm}
\end{figure}
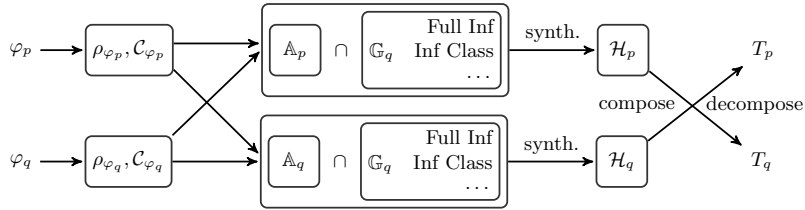
\section{Synthesis with Prefix Information Flow Assumptions}

In this section, we present algorithms for generating assume and guarantee specifications, the synthesis of hyper implementations, and obtaining the solutions for each component.
Combined, this builds our compositional synthesis approach with information-flow assumptions for distributed systems.

\subsection{Automata for Assume and Guarantee Specifications}
The first step in our synthesis approach is to construct the assume specification which builds on a finite set of information classes. 
According to \Cref{def:information:class}, there is, in theory, an unbounded number of information classes.
Our Algorithm~\ref{alg:information:classes} therefore iteratively builds automata that accept, for each prefix length, one information class.
Given the automaton for the prefix distinguishability relation over $\Sigma_\varphi \times \Sigma_{\varphi'}$, the function \lstinline[style=default,language=custom-lang]|identicalAPs| returns an automaton $\mathcal{A}_{id}$ that accepts exactly one input trace over the alphabet $\Sigma_\varphi$ at each time step.
This is achieved by choosing one explicit proposition combination for each edge in the automaton.
\begin{wrapfigure}[12]{R}{0.55\textwidth} %
\vspace{-0.8cm}
\begin{algorithm}[H]
    \caption{Information Classes}
    \label{alg:information:classes}
\begin{mycode}
let informationClasses($\mathcal{A}_{\prefixdistinguishability}$):= 
@@@let $\mathcal{A}_{c}$ = $\mathcal{A}_{\prefixdistinguishability}$
@@@let $\mathcal{A}_{id}$ = identicalAPs($\mathcal{A}_{\prefixdistinguishability}$)
@@@let result = $\emptyset$
@@@while $\mathcal{L}(\mathcal{A}_{id}) \neq\emptyset$ do
@@@@@@result.add(allTraces($\mathcal{A}_{id} \cap \overline{\mathcal{A}_{c}}$))
@@@@@@$\mathcal{A}_{c}$ = $\mathcal{A}_{\prefixdistinguishability} \cap \overline{\mathcal{A}_{id}}$
@@@@@@$\mathcal{A}_{id}$ = identicalAPs($\mathcal{A}_{c}$)
@@@return result

\end{mycode}
\end{algorithm}
\end{wrapfigure}

On this automaton, the function call \lstinline[style=default,language=custom-lang]|allTraces($\mathcal{A}_{id}\cap\overline{\mathcal{A}_c}$)| collects all traces that do not need to be distinguished from $\mathcal{A}_{id}$.
These are the traces in the negation of the prefix distinguishability relation that are related to $\mathcal{A}_{id}$.
The function \lstinline[style=default,language=custom-lang]|allTraces| can be computed by renaming the primed propositions on the edges of the automaton.
This concludes the computation of the first information class.
The algorithm continues by removing $\mathcal{A}_{id}$ from the prefix distinguishability automaton and computing the next information class until the current automaton for the prefix distinguishability relation is empty.

Algorithm~\ref{alg:information:classes} yields, if it terminates, $n$ finite automata $\mathcal{F}_\informationclass$ 
where all traces in each $\mathcal{F}_\informationclass$ do not need to be distinguished.
This implies that there exists a common output combination for each time-step that is correct for each trace in the automaton.
We now show a construction for the assume specification in~\Cref{def:assume:specification}.
\begin{construction}\label{cons:assume:specification}
We first transform the $n$ finite automata $\mathcal{F}_1, \ldots,\mathcal{F}_n$ for the information classes, as obtained as the result of Algorithm~\ref{alg:information:classes}, to the respective information class specification (see~\Cref{def:information:class:specification}).
For each automaton, we build the intersection of the goal automaton $\mathcal{A}_\varphi$ and $\mathcal{F}_i$. 
The resulting automaton accepts all traces in the information class with outputs as specified by $\varphi$.
This yields an NBA $\mathcal{B}$ that only accepts a subset of all input traces.
We lift it to an automaton for the information class specification by unionizing all input and output combinations that do not occur on $\mathcal{F}_i$, which is $\mathcal{A}_{true} \backslash \mathcal{B}$, where $\mathcal{A}_{true}$ is the automaton accepting all input and output combinations.
After performing this construction for all $n$ information class automata, the intersection of all of them accepts the assume specification.
\end{construction}

We use this automaton for the local synthesis of each component. 
The local specification is implicitly satisfied by the hyper implementation of the assume specification since it is encoded in the construction.
We now show how to construct the full information specification in \Cref{def:full:information:specification}.
\begin{construction}\label{con:full:information:specification}
Let $I = I_q\cap O_e$ be the inputs observed by $q$ and $O = O_q \cap I_p$. We assume that $|I| \leq |O|$ since we can only transmit all information if we have at least as many communication variables as environment output variables.
Let $f : I \rightarrow O$ be a bijection that maps input variables to output variables. 
We construct the LTL formula
$ \varphi = \bigwedge_{i \in I} \LTLglobally(i \leftrightarrow \LTLnext f(i)) \vee \LTLglobally (i \leftrightarrow \LTLnext \neg f(i))$.
This formula enforces that, for every $i\in I$, either the value of $i$ is copied to $f(i)$ at every point in time, 
or the negation of~$i$'s value is copied to $f(i)$ at every point.
The corresponding automaton whose language is a full-information specification is $\mathcal{A}_{\varphi}$, obtained by a standard LTL to NBA translation.
\end{construction}

Together with a guarantee specification, the hyper implementation satisfies its own local specification and the guarantee of the other component.

\subsection{Compositional Synthesis}
The last step in the compositional synthesis algorithm is the composition and decomposition of the hyper implementations.
After this process, we obtain the local implementations of the components and therefore the implementation of the distributed system. 
During composition and decomposition, we need to replace the information class variables with the actual locally received input.
The composition collects all environment and component outputs, as well as the information classes for both components.
This includes unreachable states, namely combinations of information classes and environment outputs that are impossible (the finite environment output trace is not in the information class).
We eliminate these states in \Cref{def:enforce:consistency}.
The composition is defined as follows:

\begin{definition}[Composition]\label{def:composition}
Let $p, q$ be components and $\hyperimplementation_p = (T^p, t^p_o, \tau^p,\\ o^p)$ and $\hyperimplementation_q = (T^q, t^q_0, \tau^q, o^q)$ be their respective safety hyper implementations. The composition $\hyperimplementation = \hyperimplementation_p || \hyperimplementation_q$ is a $2^{O_p\cup O_q}$-labeled $2^{O_e}\cup \informationclasses_p \cup \informationclasses_q$-transition system $(T, t_p, \tau, o)$, where $T = T^p \times T^q $, $t_0 = (t^p_0, t^q_0)$, $o ((t^p, t^q)) = o^p(t^p) \cup o^q(t^p)$, and
\begin{align*}
    \tau ((t^p, t^q), x) =~ &\tau^p(t^p, (x \cup o^q(t^q))\cap (I_p \cup \informationclasses_p)),
    \tau^q(t^q, (x \cup o^p(t^p))\cap (I_q \cup \informationclasses_q))
\end{align*}
\end{definition}
The state space is the cross product of the hyper implementations and the labeling function is the union of the local hyper implementation's outputs.
The transition function ensures that the global inputs over $2^{O_e} \cup \informationclasses_p \cup \informationclasses_q$ are separated into the inputs of the respective hyper implementations, namely $\informationclasses_p \cup I_p$ and $\informationclasses_q \cup I_q$.
For every state in the cross-product, the composition branches for every environment output and information class to a local state of a component.
Some of these states are unreachable.
For our running example, the composition includes a transition with $\neg \vin, \informationclass'$, even though the trace with $\neg \vin$ in the last step cannot be in $\informationclass'$.
We now filter states according to consistency.
We consider $\hyperimplementation(x)$ as the hyper implementation $\hyperimplementation$ terminating in $x$.

\begin{definition}[Filter]\label{def:enforce:consistency}
Let $\hyperimplementation = (T, t_0, \tau, o)$ be the composition of the $2^{O_p}$-labeled $2^{\informationclasses_p \cup I_p}$-transition system $\hyperimplementation_p$ and the $2^{O_q}$-labeled $2^{\informationclasses_q \cup I_q}$-transition system $\hyperimplementation_q$.
The consistent composition of 
$\hyperimplementation_p$ and $\hyperimplementation_q$ is 
the hyper implementation $\hyperimplementation'=(T', t'_0, \tau', o')$, with $T' = T $, $t_0' = t_0$, $o' = o$, and
\begin{align*}
    \tau' ((t^p, t^q), x) =~ 
    \left\{
	\begin{array}{ll}
		\tau ((t^p, t^q), x)  & \mbox{ if } \forall \informationclass  \in x. \hyperimplementation(t^p, t^q) \subseteq \mathcal{L}(\mathcal{F}_c) \\
		\emptyset & \mbox{ else} 
	\end{array}
\right.
\end{align*}
\end{definition}

A finite trace $\pi$ of length $n$ over $2^{O_e} \cup \informationclasses_p \cup \informationclasses_{q}$ is impossible to reach if $\informationclass$ is in $\pi[n]$ but  $\pi\downarrow_{O_e}$ is not in the information class represented by $\informationclass$.
Computing if a state is unreachable includes language inclusion of the subsystem terminating in the state and the automaton of the information class. 
However, an algorithm that enforces consistency can monitor the current information class of a state during a forward traversal of the composed hyper implementations. 
In the next and final step, the decomposition then projects the composition to only the \emph{observable} outputs of a component. 
For some input combinations, this yields a set of reachable states, of which we choose one for the decomposition. 
In essence, all these states are viable successors for the current input combination.

\begin{definition}[Decomposition]\label{def:decomposition}
Let $\hyperimplementation = (T, t_p, \tau, o)$ be the consistent composition of the $2^{O_p}$-labeled $2^{\informationclasses_p \cup I_p}$-transition system $\hyperimplementation_p$ and the $2^{O_q}$-labeled $2^{\informationclasses_q \cup I_q}$-transition system $\hyperimplementation_q$.
Furthermore, let $\mathit{min}$ be a function returning the minimal element for a subset of $T$ w.r.t.\ some total ordering over the states of $T$. 
The decomposition $\hyperimplementation|_p$ is a $2^{O_p}$-labelled $2^{I_p}$-transition system $(T^p, t^p_0, \tau^p, o^p)$ where $T^p = T$, $t^p_0 = t_0$, $o^p((t^p, t^q)) = o((t^p, t^q)) \cap O_p$, and
\vspace{-3pt}
\begin{align*}
    \tau^p (t, x) = \mathit{min} \{t'  \mid \exists y \in 2^{(O_e \cup \informationclasses_p )\backslash I_p}. t' = \tau(t, x \cup y) \}
\end{align*}
\end{definition}

The full compositional synthesis algorithm is shown in~\Cref{fig:compositional:synthesis:overview}.
Given the two local specifications, the first step is computing the prefix distinguishability NFAs.
Based on those, the assume specifications and guarantee specifications for both components are constructed and build the inputs to the local synthesis procedures.
Note that the guarantee specification can be any strategy that implements the information flow assumption, e.g., any scheduling paradigm.
After intersecting the two automata, the components must satisfy the assume and the guarantee specification together, which is achieved by trace property synthesis on the intersection of the automata.
The problem is unrealizable if either the prefix information-flow assumption is not sufficient for synthesis (there could be necessary behavioral assumptions), or not all information can be communicated to the receiver.
After composition, consistency, and decomposition, the algorithm terminates with two local implementations that, together, implement a correct distributed system:

\begin{corollary}
Let $p$ and $q$ be components with local specifications $\varphi_p$ and $\varphi_q$. 
The distributed system implementation returned by the algorithm depicted in \Cref{fig:compositional:synthesis:overview} satisfies the local specifications.
\end{corollary}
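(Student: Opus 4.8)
The plan is to chain together the correctness guarantees established earlier in the paper, tracing an arbitrary environment-input trace through the entire pipeline of Figure~\ref{fig:compositional:synthesis:overview} and showing that the resulting output satisfies $\varphi_p \wedge \varphi_q$. Fix an infinite environment input $\pi_e \in (2^{O_e})^\omega$ and let $T_p || T_q$ be the composition of the two decomposed local implementations. I would argue by showing that the trace this composition produces on $\pi_e$ coincides with a trace of the consistent composition $\hyperimplementation'$ of the two correct safety hyper implementations $\hyperimplementation_p, \hyperimplementation_q$. The correctness of each $\hyperimplementation_p$ means (by~\Cref{def:safety:hyper:implementation} and the surrounding correctness definition) that it implements its assume specification $\assumespecification_{\varphi_p}$ and some guarantee specification $\guaranteespecification_{\varphi_q}$.

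The key steps, in order, are as follows. First, I would use \emph{Lemma~\ref{lem:assume:specification}} to conclude that since $\hyperimplementation_p$ satisfies $\assumespecification_p^{\informationclasses}$, it also satisfies every information class specification $\informationclassspecifcation_p^n$, meaning its output at each step $n$ is correct for \emph{every} environment trace in the information class fed to it at that step. Second, I would invoke the guarantee side: because $\hyperimplementation_q$ satisfies the guarantee specification $\guaranteespecification_{\varphi_p}$ (\Cref{def:guarantee:specification}), its outputs on $O_q \cap I_p$ actually encode the correct information class of $\pi_e$ at each step, so that the abstract information-class input assumed by $\hyperimplementation_p$ is in fact realized by $q$'s transmission. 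The symmetric argument applies with the roles of $p$ and $q$ exchanged, using $\hyperimplementation_p$'s guarantee for $\varphi_q$ and $\hyperimplementation_q$'s assume specification. Third, I would check that the \emph{Filter} step (\Cref{def:enforce:consistency}) only removes states reachable on finite environment traces $\pi_e[0\ldots n]$ that are \emph{not} in the information class labeling the corresponding transition; hence every genuinely reachable run of $T_p || T_q$ on a real $\pi_e$ survives filtering, and conversely every surviving run corresponds to a consistent sequence of information classes matching $\pi_e$. Finally, I would verify that \emph{Decomposition} (\Cref{def:decomposition}) preserves traces: projecting away $(O_e \cup \informationclasses_p)\setminus I_p$ and choosing the $\mathit{min}$ successor yields a deterministic $2^{O_p}$-labeled $2^{I_p}$-transition system whose trace on $\pi_e$ agrees, on $O_p \cup O_e$, with a surviving trace of $\hyperimplementation'$, so the output still satisfies $\varphi_p$ on $O_p \cup O_e$.

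The main obstacle, I expect, is the \textbf{consistency interface between the assume and guarantee sides} in step three. The assume specification only promises correct outputs for information-class sequences that are \emph{consistent} (\Cref{def:assume:specification}), whereas the guarantee specification and the Filter operate on the \emph{actual} environment trace. I must show that the information-class sequence that $q$ transmits for a genuine $\pi_e$ is always consistent in the required sense --- that is, that a trace distinguishable from $\pi_e$ at step $n-1$ never becomes indistinguishable at step $n$ --- so that the receiver's assume-guarantee contract is never invoked outside its valid domain. This is exactly the monotonicity of the prefix distinguishability relation along prefixes, and I would justify it by appealing to the minimal-bad-prefix characterization (\Cref{def:prefix:distinguishability}) together with the consistency condition baked into the Filter. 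A secondary subtlety is the nondeterministic $\mathit{min}$-choice in decomposition: I must argue that \emph{any} choice of surviving successor yields a trace satisfying the local specification, which follows because all successors agree on the observable outputs $O_p$ determined by the information class, and the information-class specification pins down the correct output regardless of which concrete state is selected.
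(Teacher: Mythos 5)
Your plan is essentially the paper's own argument: the paper states this corollary without an explicit proof, treating it as the direct consequence of chaining \Cref{lem:assume:specification} (assume specification implies correctness for every information class specification), the guarantee-specification property of \Cref{def:guarantee:specification}, and the composition, filter, and decomposition constructions of \Cref{def:composition,def:enforce:consistency,def:decomposition} --- precisely the chain you spell out. The two subtleties you flag, consistency of the transmitted information-class sequence and the $\mathit{min}$-choice in decomposition, correspond exactly to the paper's informal remarks that a correct transmitter implements only consistent traces and that all surviving successors are viable for the current input combination, so your reconstruction matches (and slightly sharpens) the intended argument.
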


\section{Experiments}\label{sec:Implementation}

We implemented the compositional synthesis algorithm described so far in our prototype called \textsc{FlowSy}.
The implementation builds on the popular infinite word automaton manipulation tool \textsc{spot}~\cite{duret.22.cav} for translation, conversion, and emptiness checking of NBAs.
\text{FlowSy} implements the support for the finite automata, the construction of prefix distinguishability in Construction~\ref{cons:prefix:distinguishability}, the construction of the information classes in Algorithm~\ref{alg:information:classes}, and building automata for the assume specification in \Cref{cons:assume:specification} and full information specification \Cref{con:full:information:specification}.
The synthesis of the hyper implementations is performed by converting the %
Büchi automata to deterministic parity games and solving them with the solver \textsc{oink}~\cite{DBLP:conf/tacas/Dijk18}.
We report on two research questions, (1) how do the prefix distinguishability automaton and the information classes scale w.r.t.\  formula size and information flow over time and (2) how does \textsc{FlowSy} compare to the existing bounded synthesis approach for distributed system \textsc{HyperBosy} presented in~\cite{HyperBosy}.
Note that, at the time of evaluation, \textsc{HyperBosy} was the only tool for distributed synthesis that we were able to compare against.
A comparison with the existing information flow guided synthesis algorithm with bounded communication in~\cite{DBLP:conf/cav/FinkbeinerMM22} is infeasible since the supported languages of input specifications are disjoint.
All experiments are run on a 2.8 GHz processor with 16 GB RAM, the timeout was 600 seconds, and the results are shown in \Cref{tab:experiments}.
\begin{table}[t]
	\caption{This table summarizes the experimental results. The Benchmark and Parameter columns specify the current instance. The columns $|\varphi|$, $|\prefixdistinguishability|$, and $|\informationclasses|$ give the size of the formula, the number of states in the prefix distinguishability automaton, and the number of information classes, respectively. The last two columns report the running time of \textsc{FlowSy} and \textsc{BosyHyper} in seconds.}\label{tab:experiments}
		\centering
		 \def\arraystretch{1.05}
		 \setlength\tabcolsep{1.7mm}
		\begin{tabular}{lcccccc}
			
			\textbf{Benchmark} & \textbf{Par.}  & $\boldsymbol{|\varphi|}$ & $\boldsymbol{|\prefixdistinguishability|}$ & $\boldsymbol{|\informationclasses|} $ &  \textbf{\textsc{FlowSy}} &  \textbf{\textsc{BosyHyper}}\\
			\toprule
			Delay & 1 & 5 & 4 & 2 & 1.74 & 0.97 \\
                 & 2 & 6 & 8 & 2 & \textbf{1.87} & TO \\
                 & 3 & 7 & 16 & 2 & \textbf{1.84} & TO \\
                 & 4 & 8 & 32 & 2 & \textbf{1.94} & TO \\
                 & 5 & 9 & 64 & 2 & \textbf{2.36} & TO \\
            \midrule
            Sequence Transmission & 1 & 5 & 4 & 2 & 1.83 & \textbf{1.42} \\
			 & 2 & 11 & 6 & 4& \textbf{5.28} & TO \\
                & 3 & 16 & 10 & 8 & \textbf{36.81} & TO \\
            
             \midrule
            Conjunctions & 1 & 5 & 4 & 2 & 3.18 & \textbf{0.92} \\
                         & 2 & 9 & 4 & 4 & \textbf{4.35} & 91.80 \\
                         & 3 & 13 & 4 & 8 & \textbf{9.20} & TO \\
                         & 4 & 17 & 4 & 16 & TO & TO \\
            \midrule
            Disjunctions & 1 & 5 & 4 & 2 & \textbf{3.25} & 6.26 \\
                         & 2 & 9 & 4 & 4 & \textbf{5.63} & 60.08 \\
                         & 3 & 13 & 4 & 8 & \textbf{12.14} & TO \\
                         & 4 & 17 & 4 & 16 & TO & TO \\
\midrule		
  \end{tabular}
\end{table}
\paragraph{Benchmarks.}
The benchmarks scale in 3 different dimensions: the number of independent variables, time-steps in between information reception and corresponding output, and combinatorics over input and output variables.
The first one is independent communication of $n$ input variables in \emph{sequence transmission}.
This parametric version of the running example has $n$ conjuncted subformulas of the form $\LTLglobally (i \leftrightarrow \LTLnext o)$.
For the \emph{delay} benchmark, the number of variables is constant, but the number of time steps between input and output is increased, i.e., the formulas have the form $\LTLglobally (i \leftrightarrow \LTLnext^n o)$.
The last two benchmarks build Boolean combinations over the inputs.
The \emph{conjunctions} benchmark enforces that the conjunctions over the inputs are mirrored in the outputs.
\emph{Disjunctions} is constructed in the same way but with disjunctions in between variables.
Formulas are $\LTLglobally(i_1 \wedge i_1\wedge \ldots \leftrightarrow \LTLnext o_1 \wedge o_2 \wedge \ldots)$ and $\LTLglobally(i_1 \vee i_1\vee \ldots \leftrightarrow \LTLnext o_1 \vee o_2 \vee \ldots)$.

\paragraph{Scaling.}
\textsc{FlowSy} primarily scales in the number of computed information classes. 
Most interestingly, for benchmark \emph{delay}, the number of information classes is constantly 2, even though the size of the prefix distinguishability automaton grows exponentially. 
Independent of the length of the current trace, the automaton for the information class checks that the current position is equal to the position $n$ steps earlier. This can indeed be represented by two information classes.
For synthesizing the conjunction and disjunction benchmarks, the situation is reversed. Even though the prefix distinguishability automaton is constant, the number of information classes grows exponentially in the parameter, collecting all possible combinations of input variables. 
For the sequence transmission benchmark, all reported values scale with the input parameter, which leads to an expected increase of the running time until the timeout at step 4 (not included in~\Cref{tab:experiments}).

\paragraph{Comparison to BosyHyper.}
\textsc{FlowSy} clearly outscales \textsc{BosyHyper}. 
Most interestingly, the delay benchmark shows the almost constant running time for \textsc{FlowSy}. 
Since the number of information classes stays the same, the synthesis of the hyperproperties only scales for transmitting the information. 
\textsc{BosyHyper} must store all values for all $\LTLnext^n$ steps during synthesis, which immediately increases the search space to an infeasible size.
For the benchmarks conjunction and disjunction, one can observe that, although the information classes scale exponentially, the running time of \textsc{FlowSy} is significantly faster than that of \textsc{BosyHyper}, which is already at 91 seconds for parameter 2.
In summary, the compositionality of \textsc{FlowSy} is always beneficial for the synthesis process and it saves on the execution time dramatically when the complex communication in the distributed system can be reduced to a small number of information classes.

\section{Related Work}
Compositional synthesis for monolithic systems, i.e., architectures with one component and the environment, is a well-studied field in reactive synthesis, for example in ~\cite{SafralessCompositionalSynthesis, KuglerS09, CompositionalAlgorithmsforLTLSynthesis, dependency-based} and most recently in~\cite{akshay2024dependent}.
In multi-component systems with partial observation, compositionality has the potential to improve algorithms significantly, for example in reactive controller synthesis~\cite{DBLP:conf/cav/AlurMT16,DBLP:conf/atva/Hecking-Harbusch19}.
Assume-guarantee synthesis adheres to the same synthesis paradigm as our approach: the local components infer assumptions over the other components to achieve the local goals~\cite{DBLP:conf/tacas/ChatterjeeH07,DBLP:conf/tacas/BloemCJK15}.
The assumptions are \emph{trace properties}, restricting the behavior of the components which often is not necessary. 
If the assumptions are not sufficient, i.e., too weak to locally guarantee the specification, the assumption can be iteratively refined~\cite{DBLP:journals/tcad/MajumdarMSZ20}.
Another approach is weakening the acceptance condition to dominance~\cite{CompositionalSynthesisofDistributedSystems} or certificates that specify partial behavior of the components in an iterative fashion~\cite{DBLP:conf/atva/FinkbeinerP21}.
In our previous work on 
information flow guided synthesis~\cite{DBLP:conf/cav/FinkbeinerMM22}, we have
introduced the concept of compositional synthesis with information-flow assumptions. The work presented in the paper overcomes the two major limitations of this original approach, namely the limitation to liveness (or, more precisely, co-safety properties) and the limitation to specifications that can be realized by acting only on a a finite amount of information. 

\section{Conclusion}\label{sec:conclusion}

We have presented a new method for the compositional synthesis of distributed systems from temporal specifications. 
Our method is the first approach to handle situations where the required amount of information is unbounded. While the information-flow assumptions are hyperproperties, we have shown that standard efficient synthesis methods for trace properties can be utilized for the construction of the components. 
In future work, we plan to study the integration of the information-flow assumptions computed by our approach with the assumptions on the functional behavior of the components generated by techniques from behavioral assume-guarantee synthesis~\cite{DBLP:conf/tacas/ChatterjeeH07,DBLP:conf/tacas/BloemCJK15}. Such an integration will allow for the synthesis of systems where the components collaborate both on the distribution and on the processing of the distributed information.

\bibliographystyle{splncs04} 
\bibliography{bibliography}

\end{document}